\newtheorem{thm}{Theorem}[section]
\renewcommand\@biblabel[1]{}
\renewenvironment{thebibliography}[1]
     {\section*{\refname}%
      \@mkboth{\MakeUppercase\refname}{\MakeUppercase\refname}%
      \list{}%
           {\leftmargin0pt
            \@openbib@code
            \usecounter{enumiv}}%
      \sloppy
      \clubpenalty4000
      \@clubpenalty \clubpenalty
      \widowpenalty4000%
      \sfcode`\.\@m}
     {\def\@noitemerr
       {\@latex@warning{Empty `thebibliography' environment}}%
      \endlist}
\newcommand\starfootnote[1]{%

  \begingroup

  \renewcommand\thefootnote{$*$}\footnote{#1}%

  \addtocounter{footnote}{-1}%

  \endgroup

}
\begin{document}

\title{Improving Grid Based Bayesian Methods}

\maketitle

\begin{center}

\author{Chaitanya Joshi, Paul T. Brown, and Stephen Joe\starfootnote{All three authors are with the Department of Mathematics and

Statistics, The University of Waikato, Private Bag 3105, Hamilton, New Zealand.

Their respective e-mail addresses are cjoshi@waikato.ac.nz, ptb2@students.waikato.ac.nz, and stephenj@waikato.ac.nz.}}

\end{center}

\begin{abstract}

In some cases, computational benefit can be gained by exploring the hyper parameter space using a deterministic set of grid points instead of a Markov chain. We view this as a numerical integration problem and make three unique contributions. First, we explore the space using low discrepancy point sets instead of a grid. This allows for accurate estimation of marginals of any shape at a much lower computational cost than a grid based approach and thus makes it possible to extend the computational benefit to a hyper parameter space with higher dimensionality (10 or more). Second, we propose a new, quick and easy method to estimate the marginal using a least squares polynomial and prove the conditions under which this polynomial will converge to the true marginal. Our results are valid for a wide range of point sets including grids, random points and low discrepancy points. Third, we show that further accuracy and efficiency can be gained by taking into consideration the functional decomposition of the integrand and illustrate how this can be done using anchored f-ANOVA on weighted spaces.

\end{abstract}

\textit{Keywords:} Bayesian computation, grid based approaches, low discrepancy sequences, least square approximation to marginals, functional ANOVA.

\section{Introduction}\label{intro}

For some classes of models,  it may be possible to perform a computationally efficient approximation to the posterior using methods which explore the hyper parameter space using a deterministic set of grid points. This includes models with latent Gaussian Markov random field priors, see, for example, Simpson et.\ al.\ (2016), Lindgren and Rue (2011), and Rue et.\ al.\ (2009),  but also others, for example Joshi (2011), Ormerod (2011), and Austad and Friel (2010). The support may be identified by first finding the mode of the posterior distribution and then identifying a region around the mode where the density is significantly different to zero. The support is then explored using a set of grid points centered around the mode. Subject to the computational requirements of the functional approximation, such approaches can outperform the standard Markov chain Monte Carlo (MCMC) methods in terms of the computational time when the number of hyper parameters is small. Partly due to the availability of easy to use packages such as R-INLA (Martins et.\ al.\ 2012), such approaches have been widely used across many applications. Assuming that the support has been correctly identified, the accuracy of the inference depends on firstly, the accuracy of the functional approximation to the true posterior and secondly, the set of grid points used. No formal mathematical results on the accuracy of such approximations have yet been published however.

 Two common criticisms (for example, see Yoon and Wilson (2011)) of such an approach are that it may not accurately capture the shape of the distribution and that the computational benefit over an MCMC approach may be lost if the dimensionality of the parameter space\footnote{From here on throughout the rest of the paper, we use the term \emph{parameter space} to refer to the hyper parameter space unless otherwise specified.} is even moderately large (6 or above). The first criticism could relate to both the functional approximation used and to the fact that the posterior (or an approximation to it) has been evaluated at a set of grid points in the space. But it can be shown that even when the functional approximation is exact, the marginal distributions obtained using grid based point sets can fail to capture the shape of the distribution if the distribution is highly skewed or multi-modal.  The second criticism is also (at least in part) due to the use of the grid points since the number of grid points increases exponentially with the number of dimensions.

In this paper, we view this posterior estimation as a numerical integration problem. We (a) explore the space using more optimal point sets and (b) propose a least squares method to estimate the marginal distributions.  We argue that both of the above criticisms can be overcome if the set of points used is that generated by a \emph{low discrepancy sequence} (LDS). For convenience, we may, sometimes, refer to these points simply as LDS points.  The computational advantage is very significant and real; the same level of accuracy can be achieved using a set of LDS points several times smaller than the number of grid points required as illustrated by our examples. We will also show that for smaller dimensions, where a grid can feasibly be used, using an LDS point set with similar number of points provides much higher accuracy. Later we show that even further accuracy and computational benefit can be achieved by incorporating the functional decomposition of the integrand.

For the remainder of the paper, we will assume that the functional approximation being used is exact; that is we are evaluating the true posterior distribution at a given set of points. We also assume that the support is exactly known or has been identified correctly. We will measure the accuracy of the (least squares) approximation by comparing the marginal posterior distribution obtained using a set of points to the true marginal posterior distribution. We give sufficient conditions under which this approximation will converge to the true marginal. Our results are valid for a wide range of point sets including grids, random points and LDS points.

The paper is organised as follows. In the remainder of this section, we will provide a brief introduction to  LDS and define the notation. In Section $2$ we will discuss the grid based approaches to approximate the Bayesian posterior and propose our new method. The theoretical results will be presented in Section $3$.  In Section $4$, we propose taking into consideration the functional decomposition of the integrand to further improve the computational benefit, followed by examples in Section $5$. Finally, we summarise and discuss in Section $6$.

\subsection{Low discrepancy sequences} \label{lds}

For a Riemann integrable function $f\,:\,[0,1)^{s} \rightarrow \mathbb{R},$ a typical approximation of the integral

\begin{equation} \label{lds1}
I =\int_{[0,1)^{s}} f(\boldsymbol{x}) \, d\boldsymbol{x} \hspace{1cm} \mbox{  will be  } \hspace{1cm} \hat{I}_{N} =  \frac{1}{N}\sum_{i=1}^{N} f(\boldsymbol{x}_{i}),
\end{equation}

where the  points $\boldsymbol{x}_{1},\ldots,\boldsymbol{x}_{N}$ are sampled from the unit hypercube $[0,1)^{s}.$ The accuracy of the approximation depends on a) how these $N$ points are chosen and b) on the smoothness properties of the function $f$. For a given class of functions $f$, how fast does $E_{N} = |I - \hat{I}_{N}| \rightarrow 0$ depends on how the points are distributed in the space. One way to quantify the spread of points is by using the concept of \emph{discrepancy}. The term \emph{low discrepancy sequences} is used to refer to the sets/sequences of points sampled from a space which try to mimic the uniform distribution over that space. That is the empirical distribution function $\hat{F}(x)$  of these points closely matches the uniform distribution function $F(x)$ on that space and the term \emph{discrepancy} refers to the difference between the two. One of the most commonly used measure of discrepancy is the \emph{star discrepancy} which is given by

\begin{equation*}
D^{*}(\mathcal{P}_{N}) = \sup_{\boldsymbol{x}\in[0,1)^{s}} |F(\boldsymbol{x}) - \hat{F}(\boldsymbol{x})|,
\end{equation*}

where $\mathcal{P}_{N}$  denotes a sample of size $N$ over the unit hypercube. Note that the star discrepancy is the multivariate extension of the Kolmogorov-Smirnov statistic. A set of points $\mathcal{P}_{N}$ is called a \emph{low discrepancy sequence} if $D^{*}(\mathcal{P}_{N}) \in O(N^{-1}(\log N)^{s}).$   Integration methods which use rules  such as (\ref{lds1}) that use deterministic point sets  (e.g. LDS points) are referred to as \emph{quasi-Monte Carlo} (QMC) integration. An important result is the \emph{Koksma-Hlawka inequality} which states that for the QMC integration, if the function $f$ has a total variation in the sense of Hardy and Krause $V(f)$ that is finite then we have an upper bound on the absolute error of integration given by

\[
E_{N} \leq D^{*}(\mathcal{P}_{N}) V(f).
\]

Thus, for functions with bounded variations, the integration error for QMC integration using LDS points is bounded above by $O((\log N)^{s} \times N^{-1})$. On the other hand, the (probabilistic) Monte Carlo (MC) integration error is in $O(N^{-1/2}).$  It can be shown (Niederreiter 1992) that for the $n-$point regular grid, where $N=n^{s}$, the star discrepancy is  $D^{*}(\mathcal{P}_{N}) = 1 - (1 - 1/n)^{s}$ and therefore $D^{*}(\mathcal{P}_{N}) \in O(N^{-1/s}),$   whereas the star discrepancy of a random point set  $D^{*}(\mathcal{P}_{N}) \in O( \sqrt{\log\log N}/\sqrt{N})$ with probability $1$. So for $s>2$, the star discrepancy for randomly generated points converges faster to $0$ than that of an $n-$point regular grid.

While the discrepancy of random points will converge to $0$ in probability, they do suffer from large gaps and clusters and this can affect the accuracy of the estimate for a given set of points. LDS points are deterministic and evenly spaced and they do not suffer from this drawback. In practice, the approximations using LDS converge much faster and hence turn out to be more efficient than both a grid as well as the random points.

Low discrepancy point sets/sequences can be broadly classified into two classes based on the way they are constructed, namely: \emph{lattices} and \emph{digital nets and sequences}. Star discrepancy is not the only measure used, similar results can also be obtained using the $L_{2}$ norm, see Hickernell (1998), for example. It is also  possible to construct point sets/sequences which are low discrepancy with respect to a particular probability measure instead of the Lebesgue measure. Although LDS are usually defined over a unit hypercube, a simple linear transformation can be used to define them over a general $[\boldsymbol{a},\boldsymbol{b})$ hypercube. The numerical examples presented in this paper have all been implemented using just one type of an LDS, namely Korobov lattice (see, for example, Sloan and Joe (1994)). For a general introduction to QMC and its applications refer to Lemieux (2009); for a detailed mathematical account of the digital nets and sequences refer to Dick and Pillichshammer (2010).

\subsection{Notation}

Let $\pi(\boldsymbol{\theta})$ be an $s$ dimensional (posterior) distribution that is only known up to the normalising constant. We assume that $\int_{\Theta} \pi(\boldsymbol{\theta})\,d\boldsymbol{\theta}$ is not analytically known, which is a typical situation in the Bayesian analysis. In this paper, for the sake of simplicity, we will not explicitly mention the dependence of the posterior distribution on the observed data $\boldsymbol{y}$. That is we will use $\pi(\boldsymbol{\theta})$ instead of the usual $\pi(\boldsymbol{\theta}|\boldsymbol{y}).$ \noindent Consider the following approximation to this $s$ dimensional integral

\[
\int_{\Theta} \pi(\boldsymbol{\theta})\,d\boldsymbol{\theta} \approx \frac{\prod_{i=1}^{s} (b_{i} - a_{i})}{N} \sum_{j=1}^{N} \pi(\boldsymbol{\theta}_{j}) =  \mathcal{V} \frac{1}{N} \sum_{j=1}^{N} \pi(\boldsymbol{\theta}_{j}),
\]

where, $\Theta = [\boldsymbol{a},\boldsymbol{b})$, $\boldsymbol{a}=(a_1,\ldots,a_s)\in \mathbb{R}^{s}$,  $\boldsymbol{b}=(b_1,\ldots,b_s)\in \mathbb{R}^{s}$,  and $\boldsymbol{\theta}_{j} \in \Theta, \, j=1,\ldots, N,$ is a set of points at which $\pi$ is evaluated. This set of points could be obtained either by sampling randomly (Monte Carlo approach) or using grid points or using an LDS (quasi-Monte Carlo approach) or indeed using any other method. Note that here the term $\mathcal{V} = \prod_{i=1}^{s} (b_{i} - a_{i})$ is required since $\Theta \neq [0,1)^{s}$.

Similarly, the marginal distribution of the $k^{th}$ component $\theta_{k}$ can be approximated as

\begin{eqnarray*}
\nonumber \pi(\theta_{k}) &=& \int_{\Theta \setminus [a_{k},b_{k})} \pi(\boldsymbol{\theta})\,d\boldsymbol{\theta}_{-k} \approx \frac{\prod_{-k} (b_{i} - a_{i})}{N} \sum_{j=1}^{N} \pi(\theta_{k};\boldsymbol{\theta}_{j})\\
&=& \mathcal{V}_{-k} \frac{1}{N} \sum_{j=1}^{N} \pi(\theta_{k};\boldsymbol{\theta}_{j}),
\end{eqnarray*}

where $\mathcal{V}_{-k} =  \prod_{-k} (b_{i} - a_{i}), \;\boldsymbol{\theta}_{j} \in \Theta\setminus [a_{k},b_{k})$. Suppose the marginal distribution of $\theta_{k}$ is to be evaluated at each of the $n$ distinct fixed points $\theta_{k} = \theta_{k_{l}}, l=1,\ldots,n$. Also assume that, for each $\theta_{k_{l}}$, $\pi$ is evaluated at $m$ points in $ \Theta\setminus [a_{k},b_{k})$; thus  $\pi$ is evaluated at  $N = n \times m$ points in total. Then, the marginals evaluated at each of these distinct points can be approximated as

\begin{eqnarray*} \label{pointwise mean}
\nonumber \pi(\theta_{k} = \theta_{k_{l}}) &=& \int_{\Theta \setminus [a_{k},b_{k})} \pi(\theta_1,\ldots,\theta_k = \theta_{k_{l}},\ldots, \theta_s)\,d\boldsymbol{\theta}_{-k} \\ &\approx & \mathcal{V}_{-k}\frac{1}{m} \sum_{j=1}^{m} \pi(\theta_k=\theta_{k_{l}};\boldsymbol{\theta}_{j}) = \hat{\pi}(\theta_{k_{l}}),\, \boldsymbol{\theta}_{j} \in \Theta\setminus [a_{k},b_{k}).
\end{eqnarray*}

$\hat{\pi}(\theta_{k_{l}})$ is the point-wise mean obtained by averaging out over the remaining dimensions at point $\theta_{k_{l}}$.






\section{Approximation to the posterior using deterministic point sets} \label{approxpost}

\subsection{Grid based functional approximations} \label{grid}

Most of the recent grid based approaches  see, for example, Simpson et.\ al.\ (2016), Lindgren and Rue (2011), Joshi (2011), Ormerod (2011), and Austad and Friel (2010) are based on or inspired by the Integrated Nested Laplace Approximation (INLA) proposed by Rue et.\ al.\ (2009). Although the exact details regarding the functional approximation $\hat{\pi}(\boldsymbol{\theta})$ to the posterior $\pi(\boldsymbol{\theta})$ vary in each case, the main idea underlying all these methods can be summarised as follows:

\noindent \begin{underline} {\textbf{Algorithm: Grid based inference} } \end{underline}

\begin{enumerate}

\item{Locate the mode of $\hat{\pi}(\boldsymbol{\theta})$ using a numerical algorithm}

\item{Identify (estimate) the support $\hat{\Theta}$}

\item{Create a grid $\mathcal{G}$ over the support and evaluate $\hat{\pi}(\boldsymbol{\theta}_{j}),\, \forall \boldsymbol{\theta}_{j} \in \mathcal{G}$}

\item{The marginals $\pi(\theta_{k})$ can now be obtained by numerical integration}

\item{$\hat{\pi}(\boldsymbol{\theta}_{j})$ can also be used to approximate posterior marginals for the latent variables}

\end{enumerate}

\begin{figure} [ht]

\includegraphics[scale=0.4]{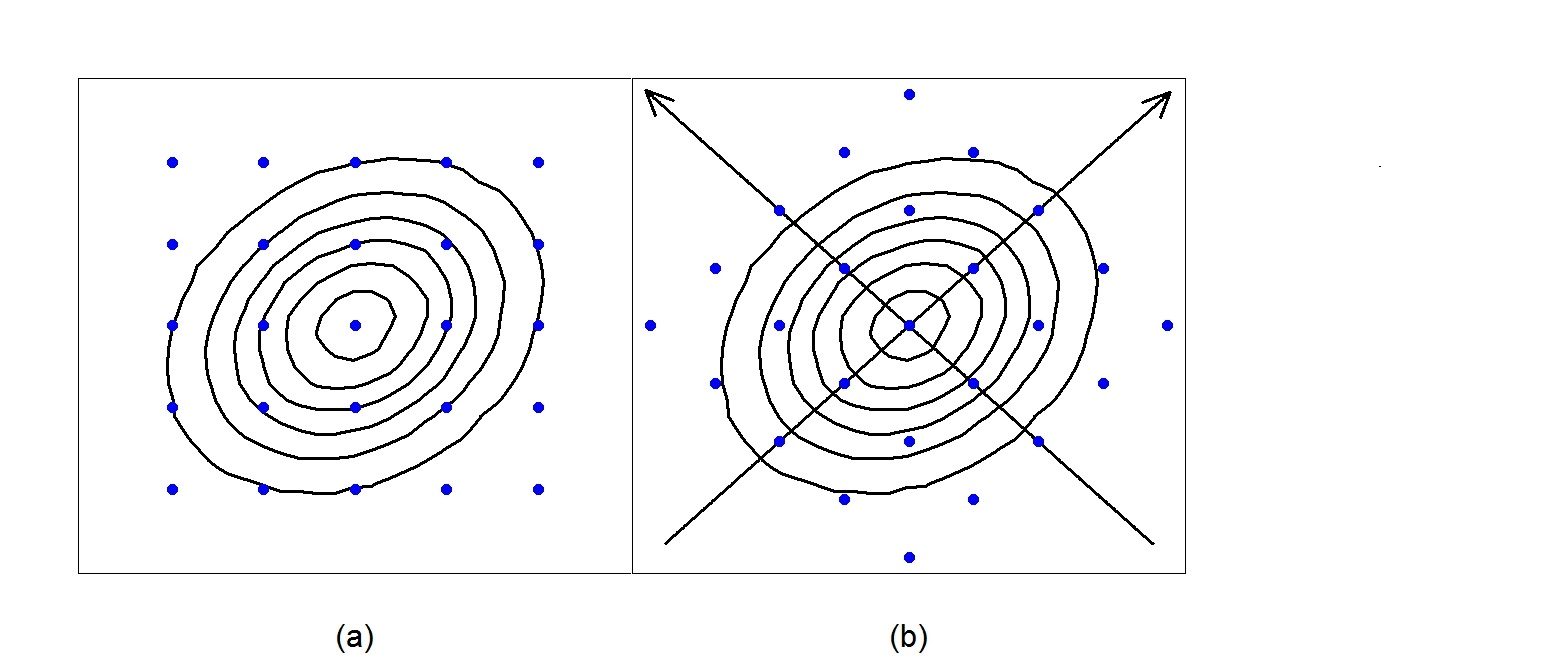}

\caption[]{Two dimensional contour plot with (a) regular grid along the parameter axes  and (b) along the eigen axes.}

\end{figure}

Note that steps $2$ and $3$ are crucial for the accuracy of this algorithm. For unimodal densities, step $2$ is typically implemented by first finding the mode and then the support  by either estimating the standard deviations along each axis using the inverse of the negative Hessian evaluated at the mode or by exploring the density along each eigen axis until the density is negligibly close to $0$.  Such approaches may yield a reasonable approximation of the support but can not guarantee accuracy. Here, however, we do not focus on this problem and assume that accurate support is available. The easiest way to implement step $3$ is to create a regular grid along the axes (Figure $1$ (a)) since the marginals  $\pi(\theta_{k}),\, k=1,\ldots, s,$ could be obtained by simply averaging out over $\theta_{-k}$ and fitting a smoother through those averages. Rue et.\ al.\ (2009) suggest exploring the space along the eigen axes and thus creating a grid along the eigen axes (Figure $1$ (b)) instead of the parameter axes to aid the exploration of the space. While such a point set may resemble a LDS (specifically a lattice), it is not, since the eigen axes are orthogonal to each other. Thus, it does not have any computational benefit over a grid created along the parameter axes, additionally, the marginals can no longer be obtained by simply averaging out and fitting a smoother.

For the remainder of this paper, we may sometimes refer to the $n-$point regular grid (Figure $1$ (a)) as the `$n-$point grid' or simply as `grid' unless we specify otherwise. Here $n$ denotes the number of points along each axis.

\subsection{Using low discrepancy sequences}

In addition to the computational cost that increases exponentially with $s$ which limits the grid based approach to be applicable only when the dimensionality of the parameter space is very small (typically $5$ or less), the other main drawback of this approach is that it fails to accurately capture the shape of the distribution when the distribution is multi-modal or highly skewed unless the number of grid points is large. This is illustrated in Figures $2$  ($4$ dimensional multi-modal distribution) and $3$ ($4$ dimensional Beta distribution). In Figure $2$, a $5-$point grid ($5^4 = 625$ points) fails to capture the shape of the marginal distributions. In this case, one needs at least a $10-$point grid ($10,000$ points) to be able to capture the shape accurately. In Figure $3$, while a $5-$point grid is able to correctly capture marginal $4$ and an $8-$point grid ($4096$ points) is able to accurately capture the symmetric marginal (marginal $2$), it is not able to accurately estimate the remaining  two highly skewed posteriors\footnote{The marginals are estimated by averaging out the remaining dimensions and then fitting a spline - as discussed in Section \ref{grid}.  Approximation using a  least squares polynomial also yields similar results}.

 This happens because a grid is not a \emph{fully projection regular} point set. A point set $\mathcal{P}_{N}$ is said to be fully projection regular if each of its projections is also low discrepancy and contains $N$ distinct points (Sloan and Joe 1994, Lemieux 2009).  Although an $n$-point regular grid has $N=n^{s}$ points in total, the projection on each of its one dimensional marginals only has $n$ distinct points. To the contrary LDS point sets are typically fully projection regular meaning that the projection on each of its marginals also has $N$ distinct points. This allows LDS to capture the shape of the distribution and its marginals more efficiently. This is illustrated in Figure $4$ where a bi-variate Beta distribution contours are shown along with (a) $5-$point grid and (b) Korobov lattice with $32$ points and the true marginals along with the orthogonal projections of the bi-variate Beta distribution at these points in each case.

While the projection regular property helps the point set to efficiently capture the shape of the distribution, the marginals can no longer be computed simply by averaging out (similar to the grid constructed along the eigen axes). Here, we propose a modification to the grid based algorithm which captures the shape of the posterior distribution more efficiently and accurately and uses a least squares method to estimate the marginals.

\newpage

\begin{figure} [t]

\centerline{\includegraphics[scale=0.26]{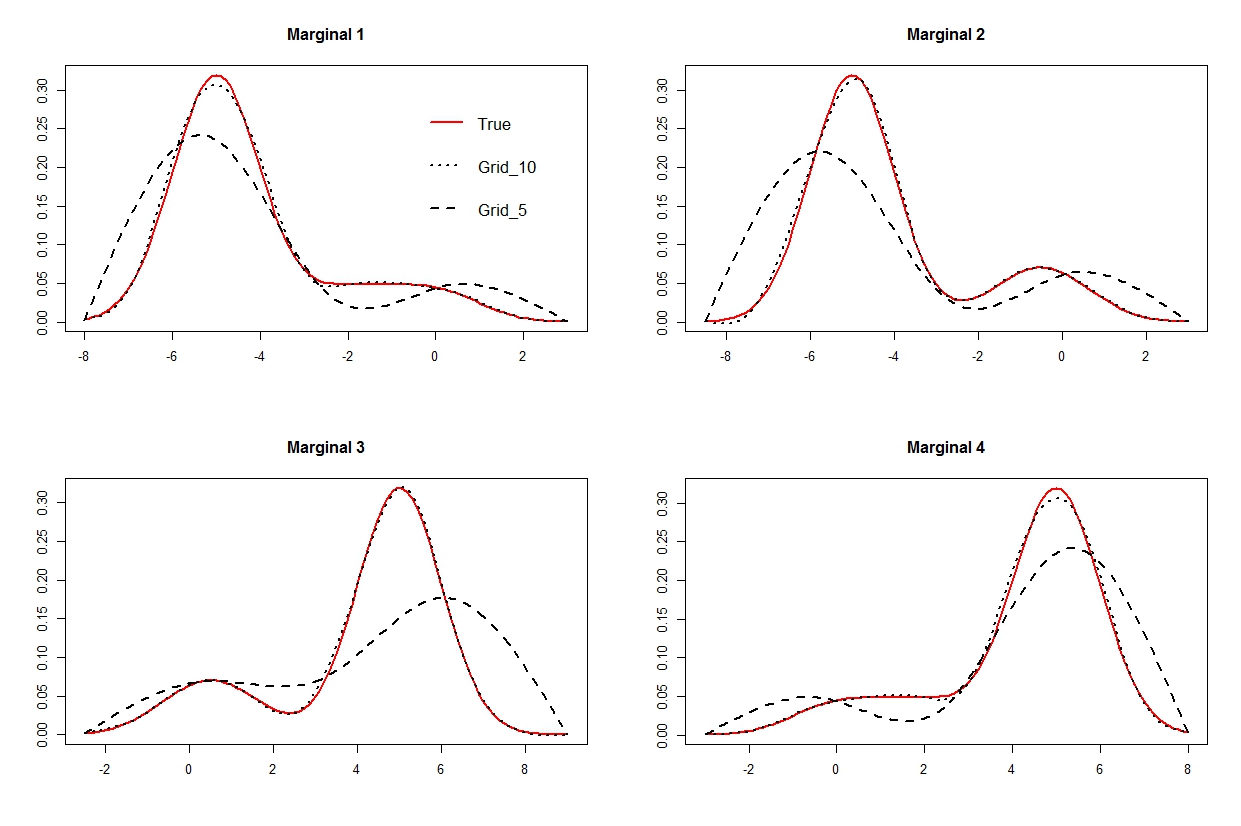}}

\caption[]{Approximating marginals of a four dimensional multi-modal distribution using  a $5-$point grid  and a $10-$point grid.}

\end{figure}

\begin{figure}[ht]

\centerline{\includegraphics[scale=0.26]{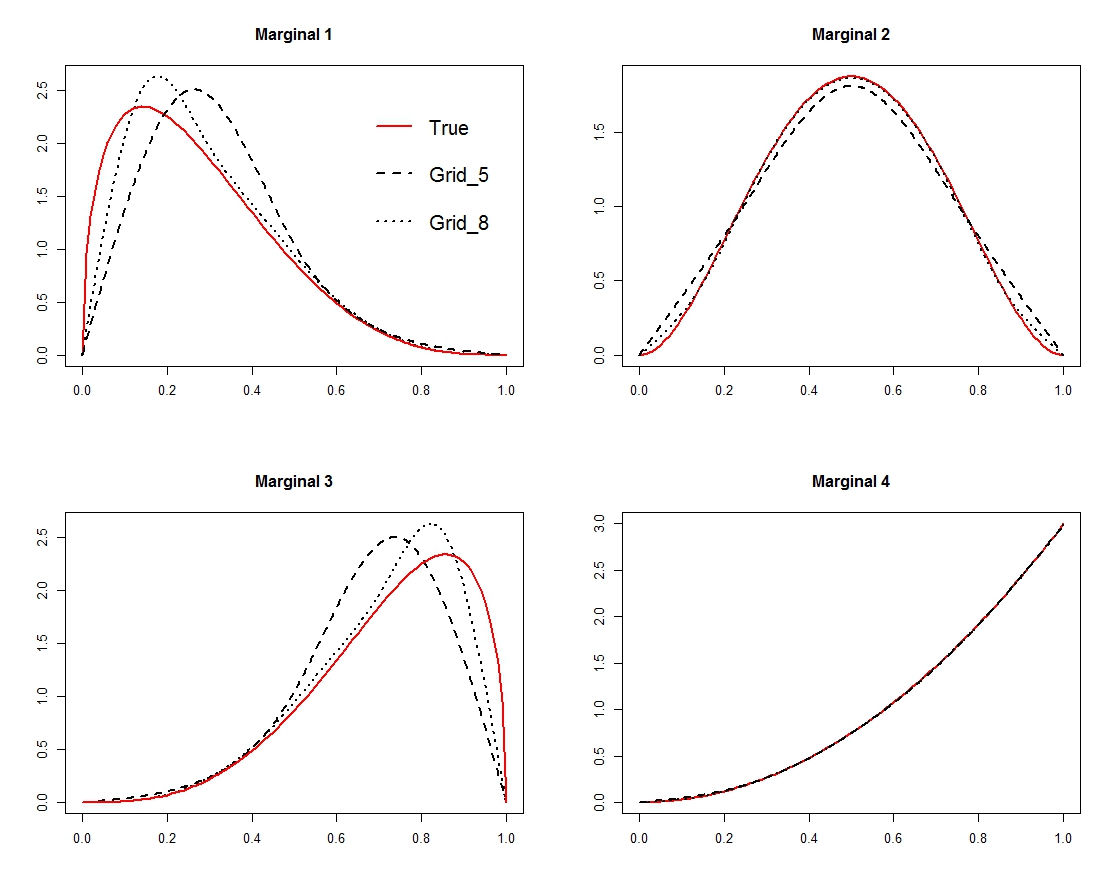}}

\caption[]{Approximating marginals of a four dimensional Beta distribution using  a $5-$point grid  and an $8-$point grid.}

\end{figure}

\clearpage

\vspace{0.5cm}

\noindent \begin{underline} {\textbf{New algorithm: using low discrepancy sequences} } \end{underline}

\begin{enumerate}

\item{Locate the mode of $\hat{\pi}(\boldsymbol{\theta})$ using a numerical algorithm}

\item{Identify (estimate) the support $\hat{\Theta}$}

\item{Generate a low discrepancy point set $\mathcal{P}_{N}$ over $\Theta$ and evaluate $\hat{\pi}(\boldsymbol{\theta}_{j}),\, \forall \boldsymbol{\theta}_{j} \in P_{N}$}

\item{To estimate $\pi(\theta_{k})$: orthogonally project  $\hat{\pi}(\boldsymbol{\theta})$ on $\theta_{k}$ and fit a polynomial of degree $n-1$ to it using the least squares method}

\item{$\hat{\pi}(\boldsymbol{\theta}_{j})$ can also be used to approximate posterior marginals for the latent variables}

\end{enumerate}

Note that the only differences between the grid based algorithm of Section \ref{grid} and the LDS algorithm above are in steps $3$ and $4$. In step $3$, the funcational approximation $\hat{\pi}(\boldsymbol{\theta}_{j})$ is now evaluated on the LDS points instead of the grid points. Thus, for example, if one is using the nested Laplace approximations as in INLA, then one now uses the same approximation but evaluates it on a LDS points set $\mathcal{P}_{N}$ instead of a grid. In step $4$, the marginals $\pi(\theta_{k})$ are now estimated using a least squares method instead of numerical integration.

In Section $3$ we show that marginals obtained using the new algorithm will converge to the true marginals as $N$ increases. More importantly, the results are valid for a wide range of point sets including a  grid, random points as well as LDS points. The numerical results shown in this paper have been obtained using rank-$1$ Korobov lattice rules. For a given $s$, an optimal rank-$1$ Korobov point set can be obtained using the software of L'Ecuyer and Munger (2016).

\begin{figure} [t]
\includegraphics[scale=0.36]{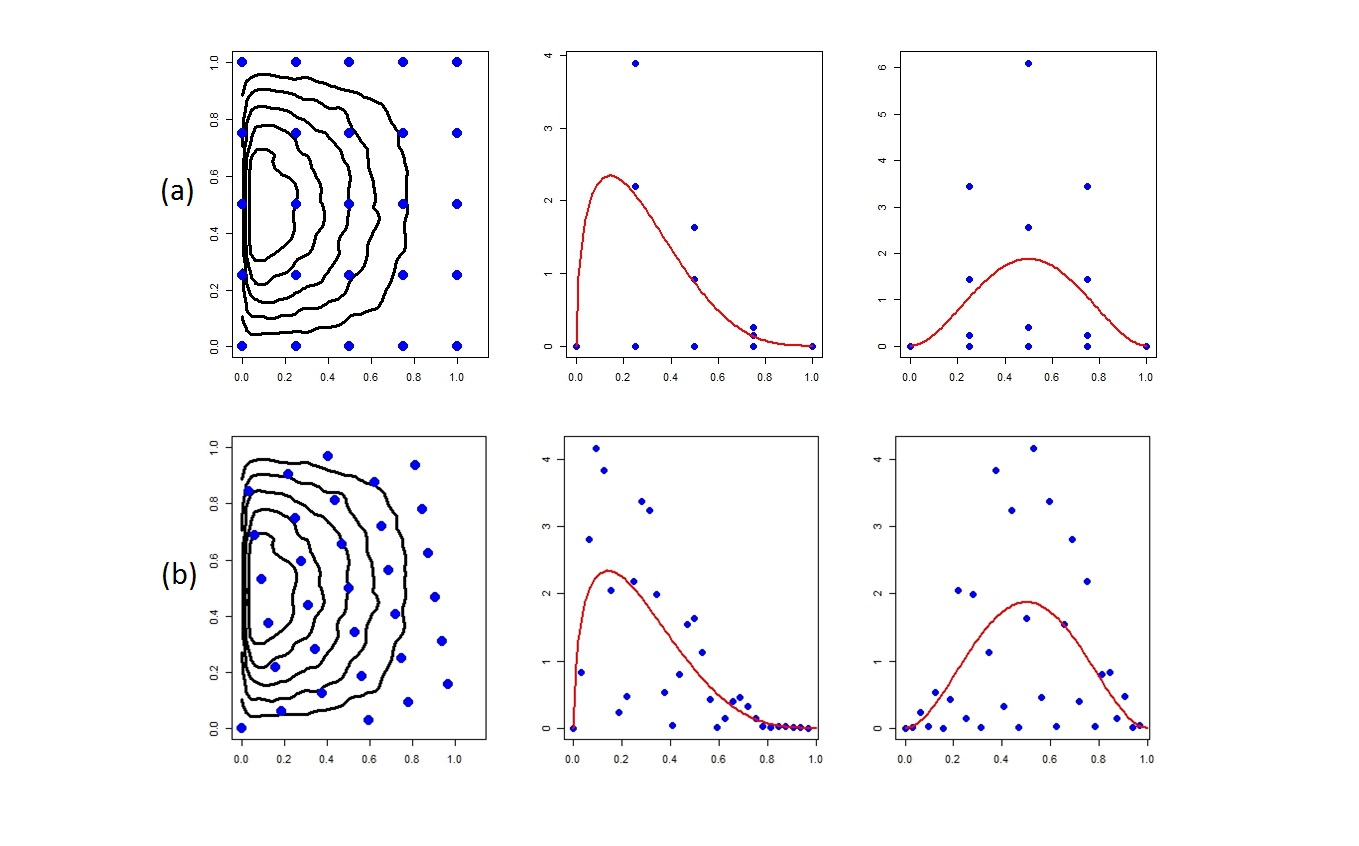}
\caption[]{Bi-variate Beta distribution contours, its true marginals along with orthogonal projections of the bi-variate Beta distribution for (a) $5-$point grid and (b) $32-$point Korobov lattice }
\end{figure}






\section{Convergence theorems} \label{convergence}

The new approach described in the previous section essentially involves evaluating $\pi$ on a set of points in $\Theta$ and then approximating the marginal distribution of $\theta_{k}$ by fitting a least squares polynomial through the orthogonal projections of  $\pi(\boldsymbol{\theta})$ on $\theta_{k}$.  We assume that there are $N=n \times m$ points in $\Theta$ such that, $\pi(\theta_{k})$ is evaluated at $n$ distinct points $\theta_{k_{l}},\, l=1,\ldots,n,$ and that for each unique value of $\theta_{k_{l}}$ there are $m$ points whose $k^{th}$ co-ordinate is equal to $\theta_{k_{l}}$. Note that this description covers a wide variety of point sets. In particular, it includes an $n$-point grid in which case $m=n^{s-1}$. More generally, a point set of this description can be obtained by first fixing the $n$ points $\theta_{k_{l}}$ and then selecting $m$ points $\boldsymbol{\theta}_{j} \in \Theta \setminus [a_{k},b_{k})$ for each distinct value of $\theta_{k_{l}}$ either using random sampling or using an LDS or indeed using any other method. The choice of the points is crucial and determines the convergence properties and the computational efficiency as discussed below.

The orthogonal projections often results in a scatter of points with non-constant variance as can be seen from Figure 4.  This suggests fitting  a weighted least squares polynomial to the orthogonal projection of $\pi(\theta_k=\theta_{k_{l}};\boldsymbol{\theta}_{j})$ on $\theta_{k}$, where the weights are proportional to the variances.  Let the predictions obtained using the weighted least squares polynomial be denoted by $\hat{\pi}_{WLS}(\theta_{k})$. In this case however, it can be shown that if the weighted least squares polynomial is of degree $(n-1)$  then it is equal to the ordinary least squares polynomial of the same degree and therefore it suffices to fit an ordinary least squares polynomial instead.  Let the predictions obtained using the ordinary least squares polynomial be denoted by $\hat{\boldsymbol{\pi}}_{LS}(\theta_{k})$.

Please see Appendix A1 for the details on the orthogonal projection and Appendix A2 for the matrix definitions for the least squares analyses. These will be needed for proving Theorems \ref{thm11} and \ref{thm12} below.

Let $\mathcal{P}_{N}$ be any point set that fits the description above. Then the following theorem holds.

\begin{thm} \label{thm11}

 For any $k \in \{1,\ldots, s\}$, $\hat{\pi}_{WLS}(\theta_{k}) = \hat{\pi}_{LS}(\theta_{k}).$

\end{thm}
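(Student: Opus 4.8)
The plan is to set up the weighted and ordinary least squares problems explicitly in matrix form and show their normal equations have the same solution. The key structural fact is that the point set has a block structure: the $n$ distinct values $\theta_{k_1},\ldots,\theta_{k_n}$ each appear with multiplicity $m$ (i.e. $m$ points share that $k$th coordinate). When we fit a polynomial of degree $n-1$ in the single variable $\theta_k$ to the $n$ distinct abscissae, the design matrix of \emph{distinct} rows is the $n\times n$ Vandermonde matrix $V$ on $\theta_{k_1},\ldots,\theta_{k_n}$, which is square and invertible (the $\theta_{k_l}$ being distinct). This is the crux: a degree $n-1$ polynomial has exactly $n$ free coefficients and there are exactly $n$ distinct design points, so the fit is an \emph{interpolation} problem at the level of the distinct abscissae.

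First I would write the full design matrix as $X = V \otimes \mathbf{1}_m$ (up to a row permutation), where $\mathbf{1}_m$ is the all-ones vector of length $m$, so that each distinct row of $V$ is repeated $m$ times. The response vector is the stacked vector of orthogonal projections $\pi(\theta_k=\theta_{k_l};\boldsymbol{\theta}_j)$. For the weighted problem with diagonal weight matrix $W$ (weights constant within each block, say $w_l$ on the block for $\theta_{k_l}$, since the variance depends only on $l$), the WLS normal equations are $X^\top W X \hat{\boldsymbol\beta}_{WLS} = X^\top W \boldsymbol y$, and for OLS, $X^\top X \hat{\boldsymbol\beta}_{LS} = X^\top \boldsymbol y$. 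The point is that $X^\top W X = V^\top D_m(W) V$ where $D_m(W)$ collects the per-block weighted counts, and similarly $X^\top X = V^\top (m I_n) V$. Since $V$ is invertible, both normal equations can be left-multiplied by $V^{-\top}$ and then solved, giving $\hat{\boldsymbol\beta}_{WLS} = V^{-1}(\text{block-averaged } \boldsymbol y)$ and $\hat{\boldsymbol\beta}_{LS} = V^{-1}(\text{block-averaged } \boldsymbol y)$ — in both cases the block-averaging is the plain mean over the $m$ points in each block, because whether we weight uniformly by $m$ or by $m w_l$, dividing through by the same per-block factor leaves the simple average. Hence $\hat{\boldsymbol\beta}_{WLS} = \hat{\boldsymbol\beta}_{LS}$, and the fitted values $\hat{\pi}_{WLS}(\theta_k) = X\hat{\boldsymbol\beta}_{WLS} = X\hat{\boldsymbol\beta}_{LS} = \hat{\pi}_{LS}(\theta_k)$ coincide.

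More conceptually, one can phrase it in one line: because the degree-$(n-1)$ polynomial has $n$ parameters and the regressor $\theta_k$ takes only $n$ distinct values, \emph{any} least squares fit (weighted or not, with any positive weights that are constant within the $m$-blocks) must reproduce, at each abscissa $\theta_{k_l}$, the weighted average of the responses in that block; but since the weights are constant within a block, that weighted average equals the unweighted average, which does not depend on the weights at all. The unique interpolating polynomial through the $n$ points $(\theta_{k_l}, \bar{y}_l)$ is then the common answer. I would present the matrix version in the proof, invoking the matrix definitions promised in Appendix A2, but mention the conceptual reason as the underlying intuition.

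The main obstacle is a bookkeeping one rather than a conceptual one: I must be careful that the weights really are constant within each $\theta_{k_l}$-block. The paper motivates the weights as being proportional to the variance of the scatter of orthogonal projections, and that variance is a function of $\theta_{k_l}$ only (it is the variance across the $m$ points sharing that coordinate), so the weight matrix $W$ is block-constant by construction — this needs to be stated cleanly. A secondary subtlety is the row-permutation/Kronecker-product identification $X = P(V\otimes \mathbf 1_m)$; I would either absorb the permutation $P$ silently (it cancels since $P^\top W P$ is still block-diagonal with the same blocks) or just argue directly on $X^\top W X$ without the Kronecker notation. Invertibility of $V$ is immediate from distinctness of the $\theta_{k_l}$ and should be stated but needs no proof.
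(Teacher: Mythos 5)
Your proposal is correct and follows essentially the same route as the paper: both rest on writing the design matrix as the Kronecker product of the square, invertible Vandermonde matrix with $\boldsymbol{1}_m$ and on the block-constant weight matrix $W\otimes I_m$, so that the degree-$(n-1)$ fit interpolates the per-block averages regardless of the weights. The paper simply carries out the cancellation inside the hat matrices $\underline{M}(\underline{M}^{T}\underline{W}\,\underline{M})^{-1}\underline{M}^{T}\underline{W}$ via Kronecker algebra, reducing both to $\frac{1}{m}(I_{n}\otimes\boldsymbol{1}\boldsymbol{1}^{T})$, whereas you solve the normal equations for the coefficient vector; these are the same argument in two presentations.
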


\begin{proof}  We have

\begin{eqnarray}  \label{ls1}
\nonumber \hat{\boldsymbol{\pi}}_{LS} (\theta_{k}) &=& \underline{M}(\underline{M}^{T}\underline{M})^{-1} \underline{M}^{T} \boldsymbol{\pi}\\
\nonumber &=& (M \otimes \boldsymbol{1}) \left((M \otimes \boldsymbol{1})^{T} (M \otimes \boldsymbol{1}) \right)^{-1} (M \otimes \boldsymbol{1})^{T} \boldsymbol{\pi}\\
\nonumber &=& (M \otimes \boldsymbol{1}) \left((M^{T} \otimes \boldsymbol{1}^{T}) (M \otimes \boldsymbol{1}) \right)^{-1} (M^{T} \otimes \boldsymbol{1}^{T}) \boldsymbol{\pi}\\
\nonumber &=& (M \otimes \boldsymbol{1}) (M^{T}M \otimes \boldsymbol{1}^{T}\boldsymbol{1})^{-1} (M^{T} \otimes \boldsymbol{1}^{T}) \boldsymbol{\pi}.  \\
\nonumber &=& (M \otimes \boldsymbol{1}) [(M^{T}M)^{-1} \otimes m^{-1}] (M^{T} \otimes \boldsymbol{1}^{T}) \boldsymbol{\pi},\,  \mbox{  since}\boldsymbol{1}^{T}\boldsymbol{1} = m \\ 
\nonumber &=&\frac{1}{m} (M \otimes \boldsymbol{1}) (M^{T}M)^{-1} (M^{T} \otimes \boldsymbol{1}^{T}) \boldsymbol{\pi} \\ 
\nonumber &=& \frac{1}{m} (M(M^{T}M)^{-1}M^{T} \otimes \boldsymbol{1}\boldsymbol{1}^{T}) \boldsymbol{\pi} \\ 
&=& \frac{1}{m} (I_{n} \otimes \boldsymbol{1}\boldsymbol{1}^{T}) \boldsymbol{\pi}, 
\end{eqnarray}

and 

\begin{eqnarray} 
\nonumber \hat{\boldsymbol{\pi}}_{WLS} (\theta_{k}) &=& \underline{M} (\underline{M}^{T}\underline{W}\underline{M})^{-1} \underline{M}^{T} \underline{W} \boldsymbol{\pi}\\
\nonumber &=& (M \otimes \boldsymbol{1}) \left((M \otimes \boldsymbol{1})^{T} (W \otimes I_{m}) (M \otimes \boldsymbol{1}) \right)^{-1} (M \otimes \boldsymbol{1})^{T} (W \otimes I_{m}) \boldsymbol{\pi}\\
\nonumber &=& (M \otimes \boldsymbol{1}) \left((M^{T} \otimes \boldsymbol{1}^{T}) (W \otimes I_{m}) (M \otimes \boldsymbol{1}) \right)^{-1} (M^{T} \otimes \boldsymbol{1}^{T}) (W \otimes I_{m}) \boldsymbol{\pi} \\
\nonumber &=&(M \otimes \boldsymbol{1}) (M^{T}W M \otimes \boldsymbol{1}^{T}I_{m}\boldsymbol{1})^{-1} (M^{T} \otimes \boldsymbol{1}^{T}) (W \otimes I_{m}) \boldsymbol{\pi} \\
\nonumber &=&(M \otimes \boldsymbol{1}) [(M^{T}WM)^{-1} \otimes (m)^{-1}] (M^{T} \otimes \boldsymbol{1}^{T}) (W \otimes I_{m}) \boldsymbol{\pi}, \, \mbox{  since } \boldsymbol{1}^{T}I_{m}\boldsymbol{1} = \boldsymbol{1}^{T} \boldsymbol{1} = m \\ 
\nonumber &=&\frac{1}{m} (M \otimes \boldsymbol{1}) (M^{T}WM)^{-1} (M^{T} \otimes \boldsymbol{1}^{T}) (W \otimes I_{m}) \boldsymbol{\pi} \\ 
\nonumber &=& \frac{1}{m} (M(M^{T}WM)^{-1}M^{T}W \otimes \boldsymbol{1}\boldsymbol{1}^{T}I_{m}) \boldsymbol{\pi} = \frac{1}{m} (I_{n} \otimes \boldsymbol{1}\boldsymbol{1}^{T}) \boldsymbol{\pi}.
\end{eqnarray}

\end{proof}

Let $\hat{\boldsymbol{\pi}}(\theta_{k_{l}})$ be the $m \times 1$ vector where every element is equal to $\frac{\hat{\pi}(\theta_{k_{l}})}{\mathcal{V}_{-k}}$.

\begin{thm} \label{thm12}

 For any $k \in \{1,\ldots, s\}$, $\mathcal{V}_{-k} \hat{\boldsymbol{\pi}}_{LS}(\theta_{k}) = \hat{\boldsymbol{\pi}}(\theta_{k_{l}}).$

\end{thm}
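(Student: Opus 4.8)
The plan is to obtain Theorem \ref{thm12} as an immediate corollary of Theorem \ref{thm11}: substitute the closed form $\hat{\boldsymbol{\pi}}_{LS}(\theta_k) = \frac{1}{m}(I_n \otimes \boldsymbol{1}\boldsymbol{1}^T)\boldsymbol{\pi}$ derived there, and then read off the right-hand side one $m$-block at a time. So the first thing I would do is fix the ordering convention behind the matrix definitions of Appendix A2: the $N = n \times m$ entries of $\boldsymbol{\pi}$ are grouped into $n$ consecutive blocks of length $m$, the $l$-th block collecting the projected values $\pi(\theta_k = \theta_{k_l}; \boldsymbol{\theta}_j)$, $j = 1, \ldots, m$, over the $m$ points whose $k$-th coordinate equals $\theta_{k_l}$. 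This is precisely the ordering under which $\underline{M} = M \otimes \boldsymbol{1}$, and hence under which Theorem \ref{thm11} holds.

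Second, I would observe that the $m \times m$ matrix $\boldsymbol{1}\boldsymbol{1}^T$ maps any $m$-vector $\boldsymbol{v}$ to $(\boldsymbol{1}^T\boldsymbol{v})\boldsymbol{1}$, i.e.\ to the constant vector all of whose entries equal the sum of the entries of $\boldsymbol{v}$; equivalently, $\frac{1}{m}\boldsymbol{1}\boldsymbol{1}^T$ is the orthogonal projector onto the constants, replacing $\boldsymbol{v}$ by its mean repeated $m$ times. Applying $\frac{1}{m}(I_n \otimes \boldsymbol{1}\boldsymbol{1}^T)$ to $\boldsymbol{\pi}$ therefore leaves a vector whose $l$-th block is the constant $m$-vector with every entry equal to $\frac{1}{m}\sum_{j=1}^{m}\pi(\theta_k = \theta_{k_l}; \boldsymbol{\theta}_j)$. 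In other words, the least squares fit $\hat{\boldsymbol{\pi}}_{LS}(\theta_k)$, restricted to the block belonging to $\theta_{k_l}$, is exactly the point-wise average over the remaining coordinates, held constant across that block.

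Finally, I would match this against the definitions recorded earlier: the point-wise mean is $\hat{\pi}(\theta_{k_l}) = \mathcal{V}_{-k}\frac{1}{m}\sum_{j=1}^{m}\pi(\theta_k = \theta_{k_l}; \boldsymbol{\theta}_j)$, and $\hat{\boldsymbol{\pi}}(\theta_{k_l})$ is the $m \times 1$ vector with constant entry $\hat{\pi}(\theta_{k_l})/\mathcal{V}_{-k}$. Carrying the factor $\mathcal{V}_{-k}$ through the block computation of the previous step then yields the claimed identity $\mathcal{V}_{-k}\hat{\boldsymbol{\pi}}_{LS}(\theta_k) = \hat{\boldsymbol{\pi}}(\theta_{k_l})$. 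The argument has essentially no analytic content once Theorem \ref{thm11} is available; the only place that genuinely needs care is the bookkeeping of the normalising constants $\mathcal{V}_{-k}$ and $1/m$ together with the check that the block ordering used to factor $\underline{M} = M \otimes \boldsymbol{1}$ is the same ordering under which $\boldsymbol{\pi}$ and $\hat{\boldsymbol{\pi}}(\theta_{k_l})$ are indexed, so that the Kronecker structure lines up block by block. That, rather than any inequality or limiting argument, is where I would expect the subtlety to lie.
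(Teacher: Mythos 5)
Your proposal is correct and follows essentially the same route as the paper: it starts from the closed form $\hat{\boldsymbol{\pi}}_{LS}(\theta_k)=\frac{1}{m}(I_n\otimes\boldsymbol{1}\boldsymbol{1}^{T})\boldsymbol{\pi}$ of Theorem \ref{thm11} and reads off the block-diagonal (each block $\frac{1}{m}J_m$) action on $\boldsymbol{\pi}$ to recover the point-wise means $\hat{\pi}(\theta_{k_l})/\mathcal{V}_{-k}$ block by block, which is exactly the paper's argument. Your added care about the block ordering and the $\mathcal{V}_{-k}$ bookkeeping is consistent with, and if anything slightly more explicit than, the published proof.
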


\begin{proof} Using Equation  (\ref{ls1}) we have that

\begin{eqnarray*}
\nonumber \hat{\pi}_{LS} (\theta_{k}) &=& \frac{1}{m} (I_{n} \otimes \boldsymbol{1}\boldsymbol{1}^{T}) \boldsymbol{\pi}\\
&=& \frac{1}{m}
\begin{pmatrix}
    J_{m} & 0_{m} &  \dots  & 0_{m} \\
    0_{m} & J_{m} & \dots  & 0_{m}\\
    \vdots & \vdots  & \ddots & \vdots \\
    0_{m} & 0_{m} & \dots  & J_{m} \\
\end{pmatrix}
\begin{pmatrix}
\boldsymbol{\pi}_1 \\
\boldsymbol{\pi}_2\\
\vdots\\
\boldsymbol{\pi}_n\\
\end{pmatrix} 
=
\begin{pmatrix}
\hat{\boldsymbol{\pi}}(\theta_{k_{1}}) \\
\hat{\boldsymbol{\pi}}(\theta_{k_{2}})\\
\vdots\\
\hat{\boldsymbol{\pi}}(\theta_{k_{n}})\\
\end{pmatrix},
\end{eqnarray*}

where each element $J_{m}$ or $0_{m}$ is a square matrix of size $m \times m$ that contains all 1's or all 0's respectively and  $\boldsymbol{\pi}_{l},\; l=1,\ldots,n$ is the $m \times 1$ vector of function evaluations $\pi(\boldsymbol{\theta})$  corresponding to $\theta_{k_{l}}$.

 \end{proof}

Theorem \ref{thm12} implies that this approach is equivalent to the interpolating polynomial approach where a polynomial of degree $(n-1)$ is fitted to $n$ function evaluations. Therefore the convergence properties can be studied using the relevant literature in Numerical Analysis. For an arbitrary set of fixed points $\theta_{k_{l}}$, the interpolating polynomial does not have good convergence properties in general. However, if $\theta_{k_{l}}$ are chosen either as Chebyshev nodes\footnote{Chebyshev nodes are the roots of the Chebyshev polynomial of the first kind} (shifted to $[a_{k},b_{k})$) or as equally spaced points then the resulting interpolating polynomial will converge to the true function under strong smoothness conditions on the function.  For Theorems \ref{thm2} and \ref{thm3} we now assume that $\theta_{k_{l}}$ are chosen either as Chebyshev nodes or as equally spaced points respectively, points $\boldsymbol{\theta}_{j} \in \Theta \setminus [a_{k},b_{k})$ could be sampled either using a grid or randomly or using an LDS. A special case is where points $\boldsymbol{\theta}_{j} \in \Theta$ are obtained using a grid of $n$ Chebyshev nodes on [$a_{k},b_{k})$ for each $k$ or using a grid of $n$ equally spaced points on [$a_{k},b_{k})$ for each $k$.

\begin{thm} \label{thm2}

If $\pi(\theta_{k})$ is infinitely differentiable such that $$\max_{\xi \in [a_{k},b_{k})} |\pi^{(n)}(\xi)| \leq C,\, \forall n,$$  for some $ C<\infty$ such that $ \frac{C}{2^{(n-1)}}\left(\frac{b_{k}-a_{k}}{2}\right )^{n} \ll (n-1)! ,\, \forall n$, and $\theta_{k_{l}}$ correspond to Chebyshev nodes on the interval $[a_{k},b_{k})$, then  $\mathcal{V}_{-k} \hat{\pi}_{LS}(\theta_{k}) \rightarrow \pi(\theta_{k})$ as $m\rightarrow \infty$ and $n\rightarrow \infty$.

\end{thm}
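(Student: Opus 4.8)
The plan is to split the total error into an \emph{interpolation error}, which would be present even if the slice integrals were computed exactly, and a \emph{perturbation error} coming from replacing those slice integrals by the point-wise means $\hat{\pi}(\theta_{k_l})$, and then to control each piece separately. The key structural fact is Theorem~\ref{thm12}: since $\hat{\pi}_{LS}$ is a polynomial of degree at most $n-1$ and, by that theorem, $\mathcal{V}_{-k}\hat{\pi}_{LS}$ reproduces the $n$ prescribed values $\hat{\pi}(\theta_{k_l})$ at the $n$ distinct nodes $\theta_{k_l}$, the function $\theta_k \mapsto \mathcal{V}_{-k}\hat{\pi}_{LS}(\theta_k)$ is exactly the unique polynomial $\hat{p}_{n-1}$ interpolating the data $\bigl(\theta_{k_l},\hat{\pi}(\theta_{k_l})\bigr)$, $l=1,\dots,n$. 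Writing $p_{n-1}$ for the polynomial of the same degree interpolating the \emph{exact} values $\bigl(\theta_{k_l},\pi(\theta_{k_l})\bigr)$, we then have for every $\theta_k\in[a_k,b_k)$
\[
\bigl|\mathcal{V}_{-k}\hat{\pi}_{LS}(\theta_k)-\pi(\theta_k)\bigr|
\;\le\; \bigl|\hat{p}_{n-1}(\theta_k)-p_{n-1}(\theta_k)\bigr| \;+\; \bigl|p_{n-1}(\theta_k)-\pi(\theta_k)\bigr|.
\]

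For the second term I would use the classical interpolation remainder $\pi(\theta_k)-p_{n-1}(\theta_k)=\frac{\pi^{(n)}(\xi)}{n!}\prod_{l=1}^{n}(\theta_k-\theta_{k_l})$ for some $\xi\in[a_k,b_k)$, together with the minimax property of the Chebyshev polynomials: because the $\theta_{k_l}$ are the Chebyshev nodes on $[a_k,b_k)$, the monic node polynomial $\prod_{l=1}^{n}(\theta_k-\theta_{k_l})$ has sup-norm exactly $2^{-(n-1)}\bigl(\tfrac{b_k-a_k}{2}\bigr)^{n}$. Using the derivative bound $|\pi^{(n)}|\le C$, this yields $\bigl|p_{n-1}(\theta_k)-\pi(\theta_k)\bigr|\le \tfrac{1}{n}\cdot\tfrac{C}{(n-1)!\,2^{n-1}}\bigl(\tfrac{b_k-a_k}{2}\bigr)^{n}$, which tends to $0$ as $n\to\infty$ precisely by the standing hypothesis $\tfrac{C}{2^{n-1}}\bigl(\tfrac{b_k-a_k}{2}\bigr)^{n}\ll(n-1)!$.

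For the first term I would expand in the Lagrange basis $L_1,\dots,L_n$ of the Chebyshev nodes, obtaining $\hat{p}_{n-1}-p_{n-1}=\sum_{l=1}^{n}\bigl(\hat{\pi}(\theta_{k_l})-\pi(\theta_{k_l})\bigr)L_l$, hence $\bigl|\hat{p}_{n-1}(\theta_k)-p_{n-1}(\theta_k)\bigr|\le \Lambda_n\max_{1\le l\le n}\bigl|\hat{\pi}(\theta_{k_l})-\pi(\theta_{k_l})\bigr|$, where $\Lambda_n=\sup_{\theta_k}\sum_{l}|L_l(\theta_k)|$ is the Lebesgue constant of the Chebyshev nodes (which grows only like $O(\log n)$). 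For each fixed $l$, $\hat{\pi}(\theta_{k_l})=\mathcal{V}_{-k}\tfrac{1}{m}\sum_{j=1}^{m}\pi(\theta_k=\theta_{k_l};\boldsymbol{\theta}_j)$ is the Monte Carlo / quasi-Monte Carlo estimate of the slice integral $\pi(\theta_{k_l})$ over $\Theta\setminus[a_k,b_k)$, so it converges to $\pi(\theta_{k_l})$ as $m\to\infty$ — by the strong law of large numbers when the $\boldsymbol{\theta}_j$ are random, and by the Koksma--Hlawka inequality (the star discrepancy of the $m$ points tends to $0$ and each slice $\boldsymbol{\theta}_j\mapsto\pi(\theta_k=\theta_{k_l};\boldsymbol{\theta}_j)$ has finite Hardy--Krause variation, which we inherit from the regularity of the integrand) when they come from an LDS. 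Letting $m\to\infty$ makes the maximum over the \emph{finite} index set $\{1,\dots,n\}$ vanish, so $\hat{p}_{n-1}\to p_{n-1}$, and then Step~2 gives $p_{n-1}(\theta_k)\to\pi(\theta_k)$ as $n\to\infty$.

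The step I expect to be the main obstacle is the interplay between the two limits $m\to\infty$ and $n\to\infty$: since $\Lambda_n$ grows (slowly) with $n$, a genuinely \emph{joint} double limit requires the worst-case slice-quadrature error over all $n$ nodes to decay faster than $1/\Lambda_n\sim 1/\log n$, which needs both a quadrature error bound uniform in $l$ (available from Koksma--Hlawka with a variation bound uniform over the slices) and a mild coupling such as $m\gtrsim n$. I would therefore either present the conclusion as the iterated limit the statement literally reads (first $m\to\infty$, then $n\to\infty$), or state explicitly the uniformity and coupling hypotheses that legitimise the double limit; everything else reduces to standard facts about Chebyshev interpolation once Theorem~\ref{thm12} has identified $\mathcal{V}_{-k}\hat{\pi}_{LS}$ with an interpolating polynomial.
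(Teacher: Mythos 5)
Your proposal is correct and follows essentially the same route as the paper: Theorem \ref{thm12} identifies $\mathcal{V}_{-k}\hat{\pi}_{LS}$ with the degree-$(n-1)$ interpolant through the point-wise means, the Koksma--Hlawka inequality (or the law of large numbers for random points) gives $\hat{\pi}(\theta_{k_{l}})\rightarrow\pi(\theta_{k_{l}})$ as $m\rightarrow\infty$, and the classical Chebyshev-node bound $\max_{\theta_{k}}\prod_{l}|\theta_{k}-\theta_{k_{l}}|\le 2^{-(n-1)}\bigl(\tfrac{b_{k}-a_{k}}{2}\bigr)^{n}$ yields the same final estimate $\tfrac{C}{2^{n-1}n!}\bigl(\tfrac{b_{k}-a_{k}}{2}\bigr)^{n}$. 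The only place you go beyond the paper is the explicit Lebesgue-constant control of the perturbation term $|\hat{p}_{n-1}-p_{n-1}|$ and the discussion of the joint versus iterated limit in $m$ and $n$; the paper simply lets $m\rightarrow\infty$ first and then applies the interpolation bound to the exact nodal values, so your refinement is a tightening of the same argument rather than a different proof.
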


\begin{proof} 

As $m\rightarrow \infty$, 

\begin{equation}
\hat{\pi}(\theta_{k_{l}}) = \frac{\prod_{-k} (b_{i} - a_{i})}{m} \sum_{j=1}^{m} \pi(\theta_k=\theta_{k_{l}};\boldsymbol{\theta}_{j}) \rightarrow \pi(\theta_{k} = \theta_{k_{l}}).
\label{eqn_converge}
\end{equation} 

Equation (\ref{eqn_converge}) holds due to Koksma-Hlawaka inequality if the $\boldsymbol{\theta}_{j}$ are sampled using a grid or an LDS and due to the Law of Large numbers if the $\boldsymbol{\theta}_{j}$ are sampled randomly.

Then, from Theorem \ref{thm12} and the standard result in approximation theory (see for example, Cheney and Kincaid (1999), Kress (1998)), it can be seen that  $$ \max_{\theta_{k} \in [a_{k},b_{k})} |\pi(\theta_{k}) - \hat{\pi}_{LS}(\theta_{k})| \leq  \max_{\theta_{k} \in [a_{k},b_{k})} \frac{|\pi^{(n)}(\theta_{k})|}{n!} \max_{\theta_{k} \in [a_{k},b_{k})} \prod_{l=1}^{n} |\theta_{k} - \theta_{k_{l}}|.$$

 This implies that  $$ \max_{\theta_{k} \in [a_{k},b_{k})} |\pi(\theta_{k}) - \hat{\pi}_{LS}(\theta_{k})| \leq  \frac{C}{n!} \max_{\theta_{k} \in [a_{k},b_{k})} \prod_{l=1}^{n} |\theta_{k} - \theta_{k_{l}}|.$$

It can be shown (see for example, Sauer (2012)) that if the points $\theta_{k_{l}}$  correspond to the Chebyshev nodes on $[a_{k},b_{k})$, then 

\[
\max_{\theta_{k} \in [a_{k},b_{k})} \prod_{l=1}^{n} |\theta_{k} - \theta_{k_{l}}| \leq \frac{1}{2^{(n-1)}} \left(\frac{b_{k}-a_{k}}{2}\right )^{n}
\]

and therefore,

\[
\max_{\theta_{k} \in [a_{k},b_{k})} |\pi(\theta_{k}) - \hat{\pi}_{LS}(\theta_{k})| \leq  \frac{C}{2^{(n-1)}n!} \left(\frac{b_{k}-a_{k}}{2}\right )^{n}.
\]

\end{proof}


\begin{thm} \label{thm3} If $\pi(\theta_{k})$ is infinitely differentiable such that $$\max_{\xi \in [a_{k},b_{k})} |\pi^{(n)}(\xi)| \leq C,\, \forall n,$$  for some $C<\infty$ such that $C \left(\frac{b_{k}-a_{k}}{n-1}\right )^{n} \ll 1 , \, \forall n$, and $\theta_{k_{l}}$ are  equidistant points then  $\mathcal{V}_{-k} \hat{\pi}_{LS}(\theta_{k}) \rightarrow \pi(\theta_{k})$ as $m\rightarrow \infty$ and $n\rightarrow \infty$.

\end{thm}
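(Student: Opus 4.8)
The plan is to follow the proof of Theorem \ref{thm2} almost verbatim, replacing only the Chebyshev node estimate by the corresponding one for equally spaced nodes. First I would repeat the $m\to\infty$ argument unchanged: by Equation (\ref{eqn_converge}), the point-wise means $\hat{\pi}(\theta_{k_{l}})$ converge to the true marginal values $\pi(\theta_{k}=\theta_{k_{l}})$ at each of the $n$ fixed abscissae, this being a consequence of the Koksma--Hlawka inequality when the $\boldsymbol{\theta}_{j}$ are a grid or an LDS and of the law of large numbers when they are sampled randomly. By Theorem \ref{thm12}, the rescaled prediction $\mathcal{V}_{-k}\hat{\pi}_{LS}(\theta_{k})$ is then (in the $m\to\infty$ limit) precisely the degree-$(n-1)$ polynomial interpolating $\pi$ at $\theta_{k_{1}},\ldots,\theta_{k_{n}}$, so the remaining task is a pure interpolation-error estimate as $n\to\infty$.

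Next I would apply the classical interpolation error bound (see, e.g., Cheney and Kincaid (1999), Kress (1998)) together with the uniform derivative hypothesis to obtain
\[
\max_{\theta_{k}\in[a_{k},b_{k})}|\pi(\theta_{k})-\hat{\pi}_{LS}(\theta_{k})|
\;\leq\; \frac{C}{n!}\,\max_{\theta_{k}\in[a_{k},b_{k})}\prod_{l=1}^{n}|\theta_{k}-\theta_{k_{l}}|.
\]
The one new ingredient is the bound on the nodal polynomial for $n$ equally spaced nodes with spacing $h=(b_{k}-a_{k})/(n-1)$; the standard estimate (see, e.g., Sauer (2012)) is
\[
\max_{\theta_{k}\in[a_{k},b_{k})}\prod_{l=1}^{n}|\theta_{k}-\theta_{k_{l}}|
\;\leq\; \frac{(n-1)!}{4}\left(\frac{b_{k}-a_{k}}{n-1}\right)^{n},
\]
which substituted into the previous display gives
\[
\max_{\theta_{k}\in[a_{k},b_{k})}|\pi(\theta_{k})-\hat{\pi}_{LS}(\theta_{k})|
\;\leq\; \frac{C}{4n}\left(\frac{b_{k}-a_{k}}{n-1}\right)^{n}.
\]
The hypothesis $C\left(\frac{b_{k}-a_{k}}{n-1}\right)^{n}\ll 1$ for all $n$ then forces the right-hand side to $0$ as $n\to\infty$, and together with the $m\to\infty$ limit this yields $\mathcal{V}_{-k}\hat{\pi}_{LS}(\theta_{k})\to\pi(\theta_{k})$.

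The interpolation error formula and the limiting arguments are routine. The genuinely delicate point, and the one I would treat most carefully, is the equispaced nodal bound: in contrast with the Chebyshev case it contributes a factorial factor $(n-1)!$ that the $1/n!$ in the error formula only narrowly offsets, so convergence is not automatic and the strong growth hypothesis on the derivatives of $\pi$ is doing the essential work (in effect, it is precisely the assumption that rules out the Runge phenomenon). I would also verify the exact constant in the equispaced bound against the cited reference and confirm that the estimate is stated over the half-open interval $[a_{k},b_{k})$ as in the theorem.
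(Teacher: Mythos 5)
Your proof is correct and follows essentially the same route as the paper's: it reuses the $m\to\infty$ convergence from (\ref{eqn_converge}) and the interpolation identification from Theorem \ref{thm12}, then replaces the Chebyshev nodal estimate by the equispaced bound $\frac{(n-1)!}{4}\left(\frac{b_{k}-a_{k}}{n-1}\right)^{n}$ to arrive at the same final error bound $\frac{C}{4n}\left(\frac{b_{k}-a_{k}}{n-1}\right)^{n}$. The only trivial difference is the citation for the equispaced bound (the paper cites Cheney and Kincaid (1999) rather than Sauer (2012)), and your remark about the factorial growth being offset only under the derivative hypothesis is a fair reading of why the assumption is needed.
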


\begin{proof} 

From (\ref{eqn_converge}) and it can be shown (see for example, Cheney and Kincaid (1999)) that if the points $\theta_{k_{l}}$ are equally spaced then

\[
\max_{\theta_{k} \in [a_{k},b_{k})} \prod_{l=1}^{n} |\theta_{k} - \theta_{k_{l}}| \leq \frac{(n-1)!}{4} \left(\frac{b_{k}-a_{k}}{n-1}\right )^{n}
\]

and therefore,

\[
\max_{\theta_{k} \in [a_{k},b_{k})} |\pi(\theta_{k}) - \hat{\pi}_{LS}(\theta_{k})| \leq  \frac{C}{4n} \left(\frac{b_{k}-a_{k}}{n-1}\right )^{n}.
\]

\end{proof}

If the function is $n$ times differentiable then the results  in Theorems \ref{thm2} and \ref{thm3} indicate that interpolation obtained using a polynomial of degree ($n-1$) will still be good as long as the derivatives are sufficiently bounded.

Theorems \ref{thm2} and \ref{thm3} provide the conditions under which $\mathcal{V}_{-k}\hat{\pi_{LS}}(\theta_{k}) \rightarrow \pi(\theta_{k}),$  for grids constructed using either the Chebyshev nodes or the equidistant points. This will require $O(n^{s})$ function evaluations though.

Most statistical distributions are smooth with bounded derivatives and therefore satisfy the smoothness requirements of Theorems \ref{thm2} and \ref{thm3}. In Example $5.1$ we illustrate how the exponential distribution, for example, satisfies these smoothness conditions.

We now show how this convergence can also be obtained by using a LDS of size $N=n\times m$ instead. This will require $O(nm)$ function evaluations where, typically, $m \ll n^{s-1}$. We partition the LDS into $n$ equal parts, each part having $m$ points. Let $[\theta_{k_{u}},\theta_{k_{u+1}})$ be one such part. Then we have 

\begin{equation}
\frac{1}{(\theta_{k_{u+1}} - \theta_{k_{u}})} \int_{\Theta \setminus [a_{k},b_{k})} \int_{\theta_{k_{u}}}^{\theta_{k_{u+1}}}  \pi(\boldsymbol{\theta}) \,d\theta_{k} \,d\boldsymbol{\theta_{-k}} \approx \frac{\prod_{-k} (b_{i} - a_{i})}{m} \sum_{j=1}^{m} \pi(\boldsymbol{\theta}_{j}),
 \label{eqn_partition}
\end{equation}

where, $\boldsymbol{\theta}_{j} \in \Theta\setminus [a_{k},b_{k}) \times [\theta_{k_{u}},\theta_{k_{u+1}})$ and $u=1,\ldots,n$. Let this approximation, i.e. RHS of Equation (\ref{eqn_partition}) be denoted by $\hat{\pi}(\theta_{k_{u}}).$

Then, as before, we fit a least squares polynomial of degree $(n-1)$ to the orthogonal projection of $\pi(\boldsymbol{\theta}_{j})$ on $\theta_{k}$. Again, let this least squares polynomial be denoted by $\hat{\pi}_{LS}(\theta_{k})$.

\begin{thm} \label{thm4}

 $\mathcal{V}_{-k} \hat{\pi}_{LS}(\theta_{k}) = \hat{\pi}(\theta_{k_{u}}), \mbox{  for  } u=1,\ldots,n.$

\end{thm}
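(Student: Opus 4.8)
The plan is to recognise Theorem~\ref{thm4} as the exact counterpart of Theorem~\ref{thm12} for the partitioned point set, and to reduce it to the same Kronecker-product computation carried out in~(\ref{ls1}). The key observation is that, once the $m$ points of $\mathcal{P}_{N}$ falling in the $u$-th part $[\theta_{k_{u}},\theta_{k_{u+1}})$ are all assigned the common abscissa $\theta_{k_{u}}$ when the degree-$(n-1)$ polynomial in $\theta_{k}$ is fitted, the $nm\times n$ design matrix again factorises as $\underline{M} = M \otimes \boldsymbol{1}$, where $M$ is the $n\times n$ Vandermonde matrix built from $1,\theta_{k_{u}},\ldots,\theta_{k_{u}}^{\,n-1}$, $u=1,\ldots,n$ (invertible since the $\theta_{k_{u}}$ are distinct), and $\boldsymbol{1}$ is the $m\times 1$ vector of ones. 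So although the partitioning replaces ``$m$ points sharing the coordinate $\theta_{k_{l}}$'' by ``$m$ points lying in the subinterval indexed by $\theta_{k_{u}}$'', from the point of view of the least squares fit nothing changes: the design is still of the form $M\otimes\boldsymbol{1}$ with $M$ square and nonsingular, and if one preferred weighted least squares with weights constant within each part the reduction of Theorem~\ref{thm11} would again collapse it to ordinary least squares.

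Given this factorisation, I would simply re-run the chain of identities in~(\ref{ls1}). Using $\boldsymbol{1}^{T}\boldsymbol{1}=m$ and the mixed-product property of $\otimes$ one obtains
\[
\underline{M}(\underline{M}^{T}\underline{M})^{-1}\underline{M}^{T} = \tfrac{1}{m}\bigl(M(M^{T}M)^{-1}M^{T}\otimes\boldsymbol{1}\boldsymbol{1}^{T}\bigr) = \tfrac{1}{m}\bigl(I_{n}\otimes\boldsymbol{1}\boldsymbol{1}^{T}\bigr),
\]
the last step because $M$ is square and invertible, so $M(M^{T}M)^{-1}M^{T}=I_{n}$. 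Hence $\hat{\boldsymbol{\pi}}_{LS}(\theta_{k}) = \tfrac{1}{m}(I_{n}\otimes\boldsymbol{1}\boldsymbol{1}^{T})\,\boldsymbol{\pi}$ is block constant, exactly as in the display following Theorem~\ref{thm12}: its $u$-th block equals $\bigl(\tfrac{1}{m}\sum_{j}\pi(\boldsymbol{\theta}_{j})\bigr)\boldsymbol{1}$, with the sum running over the $m$ points of part $u$.

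It then remains to identify the block average with the target quantity. By the definition of $\hat{\pi}(\theta_{k_{u}})$ as the right-hand side of~(\ref{eqn_partition}) and the fact that $\mathcal{V}_{-k}=\prod_{-k}(b_{i}-a_{i})$, we have $\tfrac{1}{m}\sum_{j}\pi(\boldsymbol{\theta}_{j}) = \hat{\pi}(\theta_{k_{u}})/\mathcal{V}_{-k}$; multiplying the block-constant vector through by $\mathcal{V}_{-k}$ then yields $\mathcal{V}_{-k}\hat{\pi}_{LS}(\theta_{k}) = \hat{\pi}(\theta_{k_{u}})$ for $u=1,\ldots,n$, where $\hat{\pi}(\theta_{k_{u}})$ is read as the $m$-vector with every entry equal to that value, in the same convention fixed just before Theorem~\ref{thm12}.

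The one genuinely new point --- everything else is Theorem~\ref{thm12}'s algebra verbatim --- is the step that licenses the factorisation $\underline{M}=M\otimes\boldsymbol{1}$, that is, making explicit that in the partitioned construction the polynomial is fitted treating all $m$ points of part $u$ as carrying the single abscissa $\theta_{k_{u}}$. This must be stated carefully, because if instead each point retained its own $k$-th coordinate the design matrix would be a general rank-$n$ matrix, $\underline{M}(\underline{M}^{T}\underline{M})^{-1}\underline{M}^{T}$ would be an honest orthogonal projection rather than a block-averaging operator, and the exact equality would fail (one would only get the approximate statement~(\ref{eqn_partition})). A minor accompanying hypothesis to keep in view is that the partition of $\mathcal{P}_{N}$ is balanced, with exactly $m$ points per part, which is built into the phrasing ``we partition the LDS into $n$ equal parts, each part having $m$ points''.
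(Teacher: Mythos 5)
Your proposal is correct and follows essentially the same route as the paper, whose proof simply states that the argument is the same as for Theorems~\ref{thm11} and~\ref{thm12}: the design matrix factorises as $\underline{M}=M\otimes\boldsymbol{1}$, so the hat matrix collapses to $\tfrac{1}{m}(I_{n}\otimes\boldsymbol{1}\boldsymbol{1}^{T})$ and the fitted values are the block averages defined by~(\ref{eqn_partition}). Your added remark that the $m$ points of each part must be treated as sharing the common abscissa $\theta_{k_{u}}$ for this factorisation (and hence the exact equality) to hold is a genuine clarification of a convention the paper leaves implicit, not a deviation from its argument.
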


\begin{proof} The proof is similar to that of Theorems \ref{thm11} and \ref{thm12}. 

\end{proof}

Theorem \ref{thm4} implies that in this case too, the approach is equivalent to an interpolating polynomial approach.

Let  $\Delta \theta_{k} = \theta_{k_{u+1}} - \theta_{k_{u}}$. That is $\theta_{k_{u+1}} =  \theta_{k_{u}} + \Delta \theta_{k}$.

\begin{thm} \label{thm5}

$\hat{\pi}(\theta_{k_{u}}) \rightarrow \pi(\theta_{k_{u}})$ as $m \rightarrow \infty$ and $\Delta \theta_{k} \rightarrow 0$. 

\end{thm}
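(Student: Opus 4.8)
The plan is to read $\hat{\pi}(\theta_{k_{u}})$ as, up to the fixed factor $1/\Delta\theta_{k}$, an elementary quadrature estimate of the integral of $\pi$ over the thin slab $S_{u} = \Theta\setminus[a_{k},b_{k}) \times [\theta_{k_{u}},\theta_{k_{u}}+\Delta\theta_{k})$, and then to let the two limits act separately. Denote by $A(\Delta\theta_{k})$ the left-hand side of (\ref{eqn_partition}), i.e.\ the slab average
\[
A(\Delta\theta_{k}) = \frac{1}{\Delta\theta_{k}}\int_{\Theta\setminus[a_{k},b_{k})}\int_{\theta_{k_{u}}}^{\theta_{k_{u}}+\Delta\theta_{k}} \pi(\boldsymbol{\theta})\,d\theta_{k}\,d\boldsymbol{\theta}_{-k}.
\]
Then the triangle inequality gives
\[
|\hat{\pi}(\theta_{k_{u}}) - \pi(\theta_{k_{u}})| \;\le\; |\hat{\pi}(\theta_{k_{u}}) - A(\Delta\theta_{k})| \;+\; |A(\Delta\theta_{k}) - \pi(\theta_{k_{u}})|,
\]
and I would argue that the first term tends to $0$ as $m\to\infty$ and the second as $\Delta\theta_{k}\to 0$.

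For the first term, observe that $S_{u}$ has volume $\mathcal{V}_{-k}\,\Delta\theta_{k}$, so $\hat{\pi}(\theta_{k_{u}}) = \frac{1}{\Delta\theta_{k}}\cdot\frac{\mathrm{vol}(S_{u})}{m}\sum_{j=1}^{m}\pi(\boldsymbol{\theta}_{j})$; that is, $\Delta\theta_{k}\,\hat{\pi}(\theta_{k_{u}})$ is precisely the equal-weight QMC/MC estimator of $\int_{S_{u}}\pi$. Rescaling $S_{u}$ affinely onto $[0,1)^{s}$ and applying the Koksma--Hlawka inequality (when the $\boldsymbol{\theta}_{j}$ come from a grid or an LDS) or the Law of Large Numbers (when they are random), exactly as in the justification of (\ref{eqn_converge}) in the proof of Theorem \ref{thm2}, this estimator converges to $\int_{S_{u}}\pi$; dividing through by the fixed $\Delta\theta_{k}$ yields $\hat{\pi}(\theta_{k_{u}}) \to A(\Delta\theta_{k})$. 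For the second term, Fubini's theorem lets me interchange the integrals so that
\[
A(\Delta\theta_{k}) = \frac{1}{\Delta\theta_{k}}\int_{\theta_{k_{u}}}^{\theta_{k_{u}}+\Delta\theta_{k}} \pi(\theta_{k})\,d\theta_{k},
\]
the average of the one-dimensional marginal $\pi(\theta_{k})$ over $[\theta_{k_{u}},\theta_{k_{u}}+\Delta\theta_{k})$. Since $\pi(\theta_{k})$ is continuous at $\theta_{k_{u}}$ --- indeed infinitely differentiable under the smoothness hypotheses inherited from Theorems \ref{thm2} and \ref{thm3} --- the fundamental theorem of calculus (equivalently the mean value theorem for integrals) gives $A(\Delta\theta_{k}) \to \pi(\theta_{k_{u}})$ as $\Delta\theta_{k}\to 0$, and combining the two estimates completes the argument.

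The delicate point, and the one I expect to be the main obstacle, is that the statement asks for a genuinely joint limit in $m$ and $\Delta\theta_{k}$, so the convergence in $m$ of the first term must be uniform in $\Delta\theta_{k}$ rather than merely pointwise. This reduces to checking that the Hardy--Krause variation of $\pi$ restricted to $S_{u}$ and rescaled to $[0,1)^{s}$ does not blow up as $\Delta\theta_{k}\to 0$: by the chain rule the partial derivative in the rescaled $k$th coordinate acquires a factor $\Delta\theta_{k}$ while the remaining coordinates are unaffected, so the relevant variation stays bounded (in fact non-increasing in $\Delta\theta_{k}$). Feeding a uniform bound $V'$ for this variation into Koksma--Hlawka gives $|\Delta\theta_{k}\,\hat{\pi}(\theta_{k_{u}}) - \int_{S_{u}}\pi| \le \mathcal{V}_{-k}\,\Delta\theta_{k}\,D^{*}_{m}\,V'$, hence $|\hat{\pi}(\theta_{k_{u}}) - A(\Delta\theta_{k})| \le \mathcal{V}_{-k}\,D^{*}_{m}\,V' \to 0$ uniformly in $\Delta\theta_{k}$. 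With this in place the two displayed terms are controlled independently and the result follows; if one is content to interpret the conclusion as an iterated limit, this uniformity step may simply be omitted.
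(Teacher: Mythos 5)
Your proof is correct and follows essentially the same route as the paper's: the $m\to\infty$ limit to the slab average via Koksma--Hlawka (or the LLN for random points), the Fubini interchange, and the $\Delta\theta_{k}\to 0$ limit of the one-dimensional average via the fundamental theorem of calculus. Your triangle-inequality organisation and the uniform-in-$\Delta\theta_{k}$ variation bound simply make rigorous the joint limit that the paper treats informally through its heuristic ``$1/\Delta\theta_{k}\to d/d\theta_{k_{u}}$'' step, but the substance of the argument is the same.
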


\begin{proof}

 We have that, as $m\rightarrow \infty,$

\begin{equation}
\hat{\pi}(\theta_{k_{u}}) \rightarrow \frac{1}{(\theta_{k_{u+1}} - \theta_{k_{u}})} \int_{\Theta \setminus [a_{k},b_{k})} \int_{\theta_{k_{u}}}^{\theta_{k_{u+1}}}  \pi(\boldsymbol{\theta}) \,d\theta_{k} \,d\boldsymbol{\theta}_{-k}.
\label{eqn_converge2}
\end{equation} 

Equation (\ref{eqn_converge2}) holds due to the Koksma-Hlawaka inequality.

\noindent Note that 

\begin{equation} \frac{1}{\Delta \theta_{k}} \rightarrow \frac{d}{d\theta_{k_{u}}} \mbox{ as }\Delta \theta_{k} \rightarrow 0.
\label{eqn_calc}
\end{equation}

\noindent Therefore as $m \rightarrow \infty$ and $\Delta \theta_{k} \rightarrow 0$,

\[
 \hat{\pi}(\theta_{k_{u}}) \rightarrow \frac{d}{d\theta_{k_{u}}} \int_{\theta_{k_{u}}}^{\theta_{k_{u+1}}} \int_{\Theta \setminus [a_{k},b_{k})}  \pi(\boldsymbol{\theta})  \,d\boldsymbol{\theta}_{-k} \,d\theta_{k},
\]

using (\ref{eqn_converge2}) and (\ref{eqn_calc}) and changing the order of the integration in (\ref{eqn_converge2}). Note that this can be done thanks to Fubini's theorem since we have assumed that $\pi$ is integrable and Lebesgue measure is a $\sigma-$finite measure.

The theorem is then proven since we have that, $$\frac{d}{d\theta_{k_{u}}} \int_{\theta_{k_{u}}}^{\theta_{k_{u+1}}} \int_{\Theta \setminus [a_{k},b_{k})}  \pi(\boldsymbol{\theta})  \,d\boldsymbol{\theta}_{-k} \,d\theta_{k} = \frac{d}{d\theta_{k_{u}}} \int_{\theta_{k_{u}}}^{\theta_{k_{u+1}}} \pi(\theta_{k})  \,d\theta_{k} = \pi(\theta_{k_{u}}),$$ the last equality using the fundamental theorem of calculus. 

\end{proof}

Since the partition is into equal parts, $\theta_{k_{u}}$ are all equally spaced. The following theorem shows that the least squares approximation will converge to the true marginal.

\begin{thm} \label{thm6}

 If $\pi(\theta_{k})$ is infinitely differentiable such that $$\max_{\xi \in [a_{k},b_{k})} |\pi^{(n)}(\xi)| \leq C,\, \forall n,$$   for some $ C<\infty$ such that $C \left(\frac{b_{k}-a_{k}}{n-1}\right )^{n} \ll 1  \, \forall n$, and $\theta_{k_{u}}, \, u=1,\ldots,n,$ are  equidistant points then  $\mathcal{V}_{-k}\hat{\pi}_{LS}(\theta_{k}) \rightarrow \pi(\theta_{k})$ as $m\rightarrow \infty$ and $n\rightarrow \infty$.

\end{thm}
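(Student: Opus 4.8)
The plan is to follow the pattern of the proof of Theorem~\ref{thm3}, with Theorem~\ref{thm4} playing the role of Theorem~\ref{thm12} and Theorem~\ref{thm5} playing the role of the $m\to\infty$ limit in~(\ref{eqn_converge}). By Theorem~\ref{thm4}, $\mathcal{V}_{-k}\hat{\pi}_{LS}(\theta_k)$ is precisely the polynomial of degree $n-1$ interpolating the cell values $\hat{\pi}(\theta_{k_u})$ at the equidistant nodes $\theta_{k_u}$, $u=1,\ldots,n$. Writing $p_{n-1}[v]$ for the degree-$(n-1)$ polynomial interpolating a data vector $v=(v_1,\dots,v_n)$ at those nodes, and using $p_{n-1}[\pi]$, $p_{n-1}[\hat{\pi}]$ as shorthand for the interpolants through $(\pi(\theta_{k_1}),\dots,\pi(\theta_{k_n}))$ and $(\hat{\pi}(\theta_{k_1}),\dots,\hat{\pi}(\theta_{k_n}))$ respectively, I would start from
\[
\max_{\theta_k\in[a_k,b_k)}\bigl|\pi(\theta_k)-\mathcal{V}_{-k}\hat{\pi}_{LS}(\theta_k)\bigr|
\le
\max_{\theta_k}\bigl|\pi(\theta_k)-p_{n-1}[\pi](\theta_k)\bigr|
+
\max_{\theta_k}\bigl|p_{n-1}[\pi](\theta_k)-p_{n-1}[\hat{\pi}](\theta_k)\bigr|,
\]
the last term being equal to $\max_{\theta_k}\bigl|p_{n-1}[\pi](\theta_k)-\mathcal{V}_{-k}\hat{\pi}_{LS}(\theta_k)\bigr|$ by Theorem~\ref{thm4}.

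For the first term I would reuse the estimate from the proof of Theorem~\ref{thm3} verbatim: because the $\theta_{k_u}$ are equidistant, the interpolation remainder together with the bound $\max_{\theta_k}\prod_{l=1}^n|\theta_k-\theta_{k_l}|\le\frac{(n-1)!}{4}\bigl(\frac{b_k-a_k}{n-1}\bigr)^{n}$ gives $\max_{\theta_k}\bigl|\pi(\theta_k)-p_{n-1}[\pi](\theta_k)\bigr|\le\frac{C}{4n}\bigl(\frac{b_k-a_k}{n-1}\bigr)^{n}$, which tends to $0$ as $n\to\infty$ by the hypothesis $C\bigl(\frac{b_k-a_k}{n-1}\bigr)^{n}\ll1$. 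For the second term I would pass to the Lagrange form of the two interpolants, so that it is at most $\Lambda_n\max_{1\le u\le n}\bigl|\pi(\theta_{k_u})-\hat{\pi}(\theta_{k_u})\bigr|$, where $\Lambda_n$ is the Lebesgue constant of the equidistant node set. By Theorem~\ref{thm5} the nodal discrepancy $\max_u\bigl|\pi(\theta_{k_u})-\hat{\pi}(\theta_{k_u})\bigr|$ vanishes as $m\to\infty$ and $\Delta\theta_k=(b_k-a_k)/n\to0$; quantitatively it is bounded by the $m$-point Koksma--Hlawka error inside one cell, cf.~(\ref{eqn_partition}) (which is $O((\log m)^s/m)$ uniformly in the cell), plus a bias term $\bigl|\frac{1}{\Delta\theta_k}\int_{\theta_{k_u}}^{\theta_{k_u}+\Delta\theta_k}\pi(\theta_k)\,d\theta_k-\pi(\theta_{k_u})\bigr|\le\frac{C}{2}\Delta\theta_k$, which follows from $|\pi'|\le C$.

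I expect the main obstacle to be the second term, for the same reason the smoothness hypothesis is needed throughout Theorems~\ref{thm2}--\ref{thm3}: interpolation on equidistant nodes is ill-conditioned, and $\Lambda_n$ grows exponentially in $n$, so the nodal discrepancy must be forced below $\Lambda_n^{-1}$. The quasi-Monte Carlo part of that discrepancy is handled simply by letting $m=m(n)\to\infty$ fast enough relative to $n$, which is legitimate since Theorem~\ref{thm5} makes it small for each fixed $n$. The $O(\Delta\theta_k)$ bias is more delicate because $\Lambda_n\Delta\theta_k$ need not be small on its own; here I would expand the cell average as $\frac{1}{\Delta\theta_k}\int_{\theta_{k_u}}^{\theta_{k_u}+\Delta\theta_k}\pi=\pi(\theta_{k_u})+\frac{\Delta\theta_k}{2}\pi'(\theta_{k_u})+O(\Delta\theta_k^2)$ and note that the interpolant of the values $\pi'(\theta_{k_u})$ stays uniformly bounded for large $n$ (apply the Theorem~\ref{thm3} estimate to $\pi'$, whose derivatives are also bounded by $C$), so the amplified bias is only $O(1/n)$ and still vanishes. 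With the two limits taken in this coupled way, both terms go to $0$ and $\mathcal{V}_{-k}\hat{\pi}_{LS}(\theta_k)\to\pi(\theta_k)$, completing the proof.
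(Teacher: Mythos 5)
Your argument is essentially the paper's own: the printed proof of Theorem \ref{thm6} is the single line ``It follows from Theorems \ref{thm3}, \ref{thm4} and \ref{thm5}'', and your decomposition into (i) the equidistant-node interpolation error for the true marginal, controlled by the Theorem \ref{thm3} bound, and (ii) the propagation of the nodal errors $|\pi(\theta_{k_u})-\hat{\pi}(\theta_{k_u})|$, controlled via Theorems \ref{thm4} and \ref{thm5}, is exactly how those results are meant to be combined. Where you go beyond the paper is in taking the stability question seriously: the Lebesgue constant $\Lambda_n$ of equidistant nodes grows exponentially, so the nodal errors must be driven below $\Lambda_n^{-1}$, forcing a coupled limit $m=m(n)\to\infty$; the paper never mentions this, and its bare ``as $m\to\infty$ and $n\to\infty$'' silently requires it. One loose end in your sketch: after removing the first-order bias $\frac{\Delta\theta_k}{2}\pi'(\theta_{k_u})$, the remaining per-node discrepancy is only $O(\Delta\theta_k^2)$, and $\Lambda_n\,O(\Delta\theta_k^2)$ still diverges, so stopping the expansion at first order does not suffice. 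The repair is to push the Taylor expansion of the cell average to all orders, $\frac{1}{\Delta\theta_k}\int_{\theta_{k_u}}^{\theta_{k_{u+1}}}\pi(\theta_k)\,d\theta_k=\sum_{j\ge 0}\frac{(\Delta\theta_k)^j}{(j+1)!}\,\pi^{(j)}(\theta_{k_u})$, and interpolate term by term: each interpolant $p_{n-1}[\pi^{(j)}]$ is uniformly bounded because the hypothesis bounds every derivative of every $\pi^{(j)}$ by the same $C$, so the total amplified bias is $O(\Delta\theta_k)$ exactly as you intend. With that repair your write-up is a complete (indeed more complete than the paper's) proof along the paper's intended route.
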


\begin{proof} 

It follows from Theorems \ref{thm3}, \ref{thm4} and \ref{thm5}.  

\end{proof}

Note that if the function is $n$ times differentiable then the results  in Theorem \ref{thm6}  indicate that interpolation obtained using a polynomial of degree ($n-1$) will still be good as long as the derivatives are sufficiently bounded. This approach requires $O(mn)$ function evaluations, where typically $m < n^{(s-1)}$ and therefore this approach is computationally efficient compared to using an $n$ point grid.

Most statistical distributions are smooth with bounded derivatives and therefore satisfy the smoothness requirements of Theorem \ref{thm6}. In Example $5.1$ we illustrate how the exponential distribution, for example, satisfies these smoothness conditions.






\section{Using f-ANOVA decomposition} \label{f-anova}

As discussed in Section \ref{lds}, for LDS points $D^{*}(P_{N}) \in O(N^{-1}(\log N)^{s}).$ As $s$ increases, the discrepancy will converge more slowly. That is for a fixed $N$, as $s$ increases, the discrepancy will worsen. However, it is possible to generate an LDS which is optimised with respect to certain components (dimensions) of the integrand so that the discrepancy is low corresponding to those components. This can be done by generating an LDS on weighted spaces with weights attached to each component reflecting its relative importance. A natural way to do this is to look at the variance contributions of each of the functional components and attach weights proportional to the variance contribution. The variance contributions can be determined using the functional ANOVA (f-ANOVA) decomposition of the integral. See, for example, Lemieux (2009) and  Owen (2003).

\subsection{f-ANOVA} \label{f-anova1}

The functional ANOVA is useful to decompose an $s$ dimensional integrand as a sum of $2^{s}$ components based on each possible subset $\boldsymbol{\theta}_{I} = (\theta_{i_{1}},\ldots,\theta_{i_{d}})$ of variables, where $I=\{i_{1},\ldots,i_{d}\}\subseteq \{1,\ldots,s\}.$ The decomposition of an $s$ dimensional integrand $\pi(\cdot)$  is given by 

$$ \pi(\boldsymbol{\theta}) = \sum_{I\subseteq \{1,\ldots,s\}} \pi_{I}(\boldsymbol{\theta}), $$ where for nonempty subsets we have $$\pi_{I}(\boldsymbol{\theta}) = \int_{\Theta\setminus \Theta^{d}} \pi(\boldsymbol{\theta})\,d\boldsymbol{\theta}_{-I} -\sum_{J\subset I} \pi_{J}(\boldsymbol{\theta}).$$ Here $d=|I|$ and $\Theta^{d}$ is the parameter subspace for the parameters contained in $I$. The ANOVA component $\pi_{\emptyset}(\boldsymbol{\theta})$ is the integral $I(\pi) = \int_{\Theta}\pi(\boldsymbol{\theta})\,d\boldsymbol{\theta}.$

The expected value of each of these components is $0$ and they also have $0$ covariance. These properties imply that the variance of each component is $$\sigma_{I}^{2} =  \int_{\Theta}\pi_{I}^{2}(\boldsymbol{\theta})\,d\boldsymbol{\theta}$$ and we can write Var$(\pi) = $ Var$(\pi(\boldsymbol{\theta})) = \sigma^{2} =\sum_{I}\sigma_{I}^{2}$. Therefore, $$S_{I} = \frac{\sigma_{I}^{2}}{\sigma^{2}} \in [0,1]$$ can be interpreted as a measure of the relative importance of $\pi_{I}$.

The catch is that the f-ANOVA decomposition requires computing several integrals. Typically, the more straightforward of these integrals is 

$I(\pi) = \int_{\Theta}\pi(\boldsymbol{\theta})\,d\boldsymbol{\theta},$ which is precisely the integral we set out to approximate efficiently. f-ANOVA requires calculating another $2^{s}-1$ integrals to find all the components and then $2^{s}$ integrals to compute the variances for each of those components. Some of these integrals could be infinite if the support is infinite and therefore in addition to being computationally expensive it could also yield a meaningless outcome. Despite these drawbacks, f-ANOVA has been successfully used in situations which require repeated application of a complex integral, in particular for those integrands where the lower order components account for most of the variation. These include option pricing and other applications in finance (see Lemieux (2009) for a detailed review, and also, for example, Griebel et.\ al.\ (2013) and Caflisch et.\ al.\ (1997)). While using f-ANOVA to compute the exact variance components in order to increase efficiency in computing Bayesian posterior seems to defy the purpose, we show below that  it may be possible to find a quick approximation to the variance components using \emph{anchored} f-ANOVA over \emph{weighted spaces}.

\subsection{Anchored f-ANOVA over weighted spaces}

Prior to defining the anchored f-ANOVA over weighted spaces, we must first define the \emph{f-ANOVA over weighted spaces} and the \emph{anchored f-ANOVA}. Please see the Appendix  A$3$ for these details. Here, we propose to use the anchored f-ANOVA over weighted spaces to calculate the estimates of variance components. Working over weighted spaces ensures that the variance integrals are finite (and therefore meaningful). Using the anchored version ensures that the analysis is quick thus retaining the computational advantage.

Let $\boldsymbol{c} = (c_1,\ldots,c_s) \in \Theta$ be the anchor point, $f(\boldsymbol{y} | \boldsymbol{\theta})$ be the likelihood function and $g(\boldsymbol{\theta}) = \prod_{k=1}^{s} g(\theta_{k})$ be the prior distribution, as defined in (\ref{prior1}). We are interested in evaluating the integral

\[
I(\pi) = \int_{\Theta}\pi(\boldsymbol{\theta})\,d\boldsymbol{\theta} = \int_{\Theta} f(\boldsymbol{y} | \boldsymbol{\theta}) g(\boldsymbol{\theta}) \,d\boldsymbol{\theta},
\]

that is, integrating the posterior distribution can also be viewed as integrating the likelihood function over the weighted space defined by the prior distribution. When viewed this way, the integrand to be decomposed is $f(\boldsymbol{y} | \boldsymbol{\theta})$. Thus, we can find the important components of $\pi(\boldsymbol{\theta})$ by employing the anchored f-ANOVA approach to $f(\boldsymbol{y} | \boldsymbol{\theta})$ applied w.r.t weights given by $g(\boldsymbol{\theta}).$

\noindent Then $f_{\emptyset} (\boldsymbol{y} | \boldsymbol{\theta})$ is approximated as  $$f_{\emptyset}(\theta_1,\ldots,\theta_s) = f_{\emptyset}(c_1,\ldots,c_s) \prod_{k=1}^{s} g(c_{k}),$$ and the function corresponding to the first component is $$f_{\theta_1}( \theta_1,\ldots,\theta_s) = f(\theta_1,c_2,\ldots,c_s)  \prod_{k=2}^{s} g(c_{k})- f_{\emptyset}(\boldsymbol{\theta}),$$ and in general for any subset $I$, 

\[ 
f_{I}(\boldsymbol{\theta}) = f(\boldsymbol{c}_{-I};\boldsymbol{\theta}_I)  \prod_{k \not\in I}^{s} g(c_{k}) -\sum_{J\subset I} f_{J}(\boldsymbol{\theta}),
\]

where $f(\boldsymbol{c}_{-I};\boldsymbol{\theta}_I)$ represents the the value of $f(\boldsymbol{\theta})$ evaluated at anchor point $\boldsymbol{c}$ except for the variables involved in $I$ and that the weight functions are only included for those variables which are not contained in $I$. The variance of each component can then be approximated as  

\begin{equation} \label{anchored-2}
\sigma_{I}^{2} =  \int_{\mathbb{R}^{s}}f_{I}^{2}(\boldsymbol{\theta}) g(\boldsymbol{\theta})\,d\boldsymbol{\theta}.
\end{equation} 

Then,  $$S_{I} = \frac{\sigma_{I}^{2}}{\sigma^{2}} \in [0,1]$$ can be interpreted as a measure of the relative importance of $\pi_{I}$. Often this is expressed in percentages instead.






\section{Examples}

In this section we illustrate the convergence properties and the computational benefit of the proposed method using a few standard distributions. Wherever possible, we also compare the results against those obtained using a grid.

\subsection{Exponential distribution}

Most distributions used in statistics are smooth and have smooth derivatives. The Exponential distribution is slightly different since the derivative does not exist at zero. However, here we show that it still satisfies the smoothness conditions imposed by Theorem \ref{thm6} (and also for Theorems \ref{thm2} and \ref{thm3}).  Suppose one of the marginal (posterior) distributions is exponential with parameter $\lambda$. Then we have that,

\begin{equation*}
\pi(\theta_{k}) = \lambda e^{-\lambda \theta_{k}};
\end{equation*}

the $n^{th}$ derivative is given by

\begin{equation*}
\pi^{(n)}(\theta_{k}) = (-1)^{n}\lambda^{n+1} e^{-\lambda \theta_{k}},
\end{equation*}

and

\begin{equation*}
\sup_{\theta_{k}} |\pi^{(n)}(\theta_{k})| = \lim_{\theta_{k} \rightarrow 0+} |\pi^{(n)}(\theta_{k})| = \lambda^{n+1}.
\end{equation*}

Note that, here $\Theta = [a,b) = [0,b)$ for some $b<\infty.$ Then, $\exists n'>0$ and $c<1$ such that $\forall n> n'+1, \, \frac{b}{n-1} \leq \frac{1}{nc} < 1.$  Further, for any $\lambda < \infty, \, \exists n'' > n'$ such that, $\forall n>n'', \lambda^{n+1}\left (\frac{1}{nc}\right )^{n} \ll 1.$

Thus, it can be seen that conditions for Theorem \ref{thm6} (and also for Theorems \ref{thm2} and \ref{thm3}) are satisfied and $\mathcal{V}_{-k} \hat{\pi}_{LS}(\theta_{k}|\boldsymbol{Y}) \rightarrow \pi(\theta_{k}|\boldsymbol{Y})$ as $m\rightarrow \infty$ and $n\rightarrow \infty.$ This is illustrated in Figure $5$. Here, the joint distribution is bi-variate and is a product of two Exponential distributions. We find the least squares approximations to the marginals using Korobov lattices with different $n$ and $m$, the convergence is achieved as they both increase.

\begin{figure} [ht]
\centerline{\includegraphics[scale=0.5]{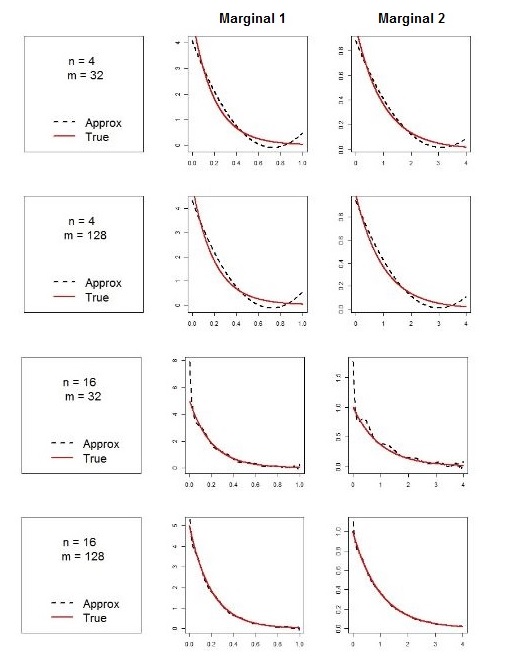}}
\caption[]{ Least squares approximation to the Exponential marginals using Korobov lattices as $n$ and $m$ increase.}
\end{figure}

\subsection{ Multi-modal and skewed distributions}

Figures $2$ and $3$ illustrate that a grid is quite inefficient at accurately capturing the shape of the distribution even in low dimension problems, especially when it is multi-modal or heavily skewed.  Here, we revisit those distributions and try to estimate the marginals using the LDS points to explore the space and fitting the least squares polynomials  of degree $(n-1)$ through the orthogonal projections of the joint distribution on the marginals.

Marginals obtained using the Korobov lattice with $4096$ points in Figure $6$  and with $1024$ points in Figure $7$ show that in both cases, each of the marginals can be very accurately estimated by using much fewer points than the grid. It also shows that in each case using a Korobov lattice with even fewer points still gives estimates better than the ones obtained using a $5-$point grid in Figures $2$ and $3$.

\begin{figure} [ht]
\includegraphics[scale=0.27]{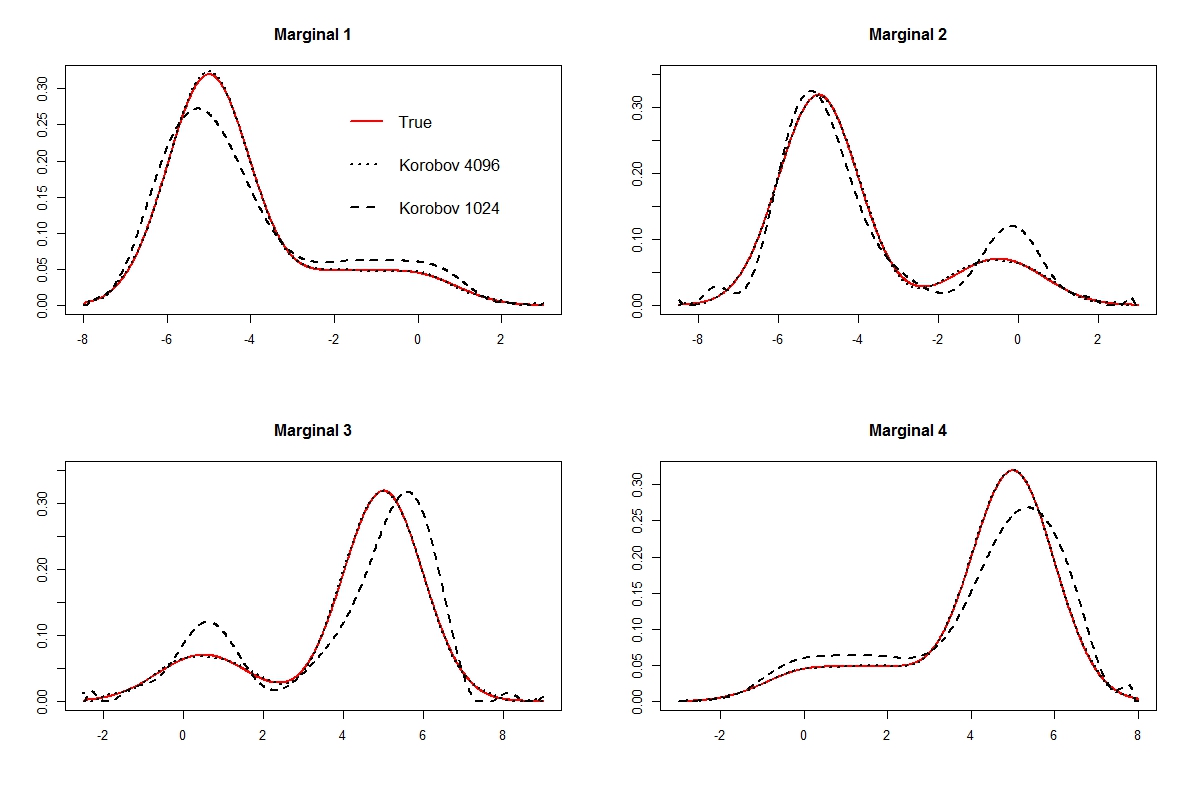}
\caption[]{Approximating marginals of a four dimensional multi-modal distribution using  Korobov lattices with $1024$ points and $4096$ points.}
\end{figure}

\begin{figure} [ht]
\centerline{\includegraphics[scale=0.27]{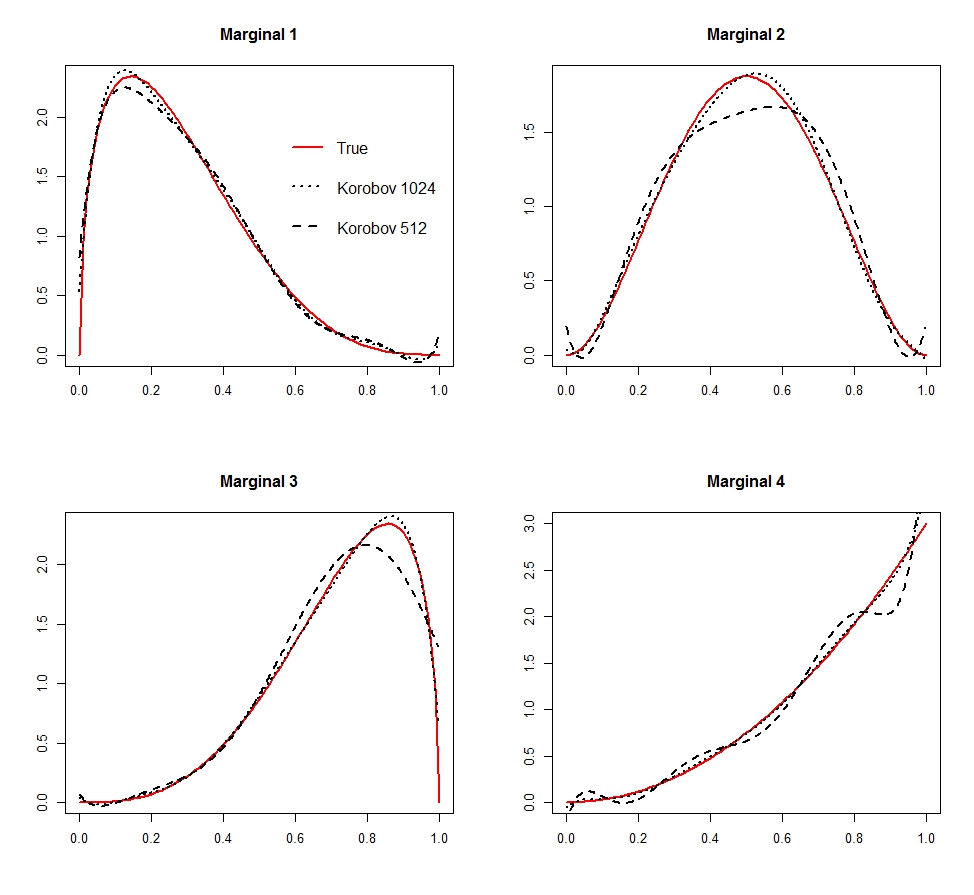}}
\caption[]{Approximating marginals of a four dimensional Beta distribution using  Korobov lattices with $512$ points  and $1024$ points.}
\end{figure}

\subsection{ High dimensional posteriors} \label{gamma-ex}

To illustrate the real computational benefit of using low discrepancy sequences, we consider two posteriors of dimensions $10$ and $12$ respectively. These posteriors have been generated as products of independent Gamma distributions with different parameters. A $5-$point grid will require $5^{10} = 9,765,625$ points in $10$ dimensions and $244,140,625$ points in $12$ dimensions and will likely still yield inaccurate estimates, as illustrated in Figure $3$.

\begin{figure} 
\includegraphics[scale=0.45]{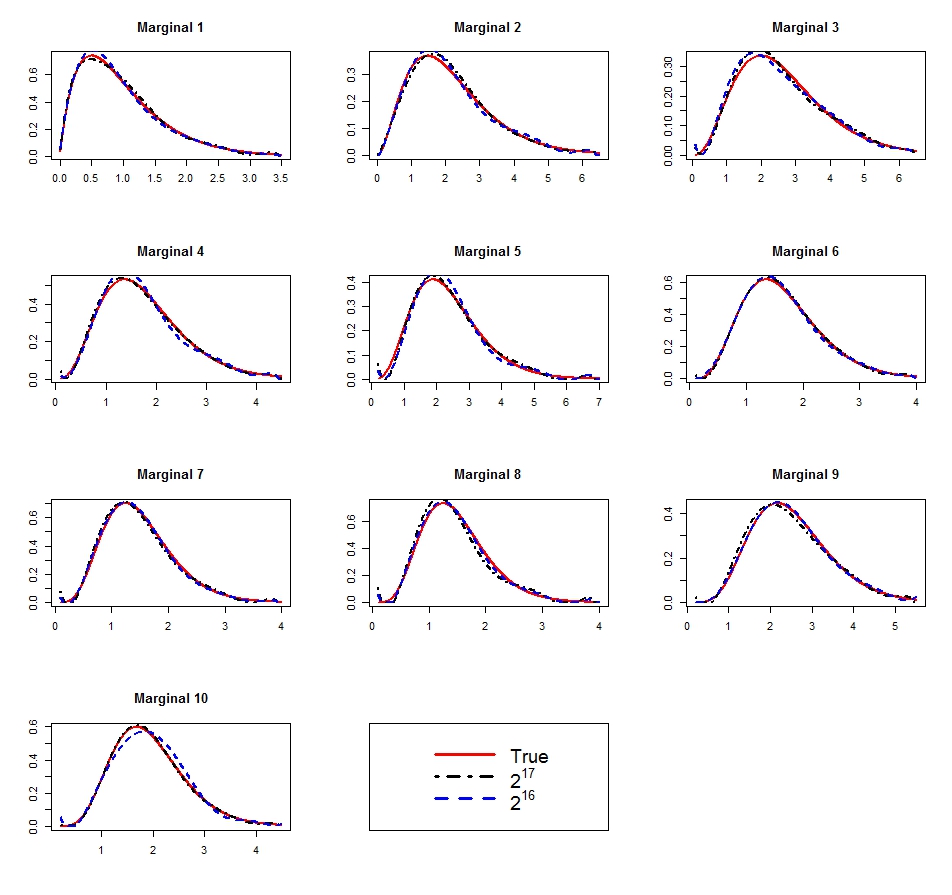}
\caption[]{$10$ dimensional Gamma using Korobov lattice with i) $2^{16} = 65,536$ and ii) $2^{17}=131,072$ points.}
\end{figure}

Figure $8$ shows that for $s=10$, very accurate estimates can be obtained using LDS with as little as $2^{16}$ points ($150$ times fewer than a $5-$point grid). Although estimates obtained using $2^{17}$ points are even more accurate, the difference between the two is very small suggesting that our estimates have started to converge to the true marginals. For $12$ dimensional Gamma, $2^{16}$ points give reasonably accurate estimates and the convergence is achieved by $2^{19} (= 524,288)$ points as can be seen in Figure $9$. However, this is negligible compared to the $244$ million points required for a $5-$point grid.

\begin{figure} 
\includegraphics[scale=0.45]{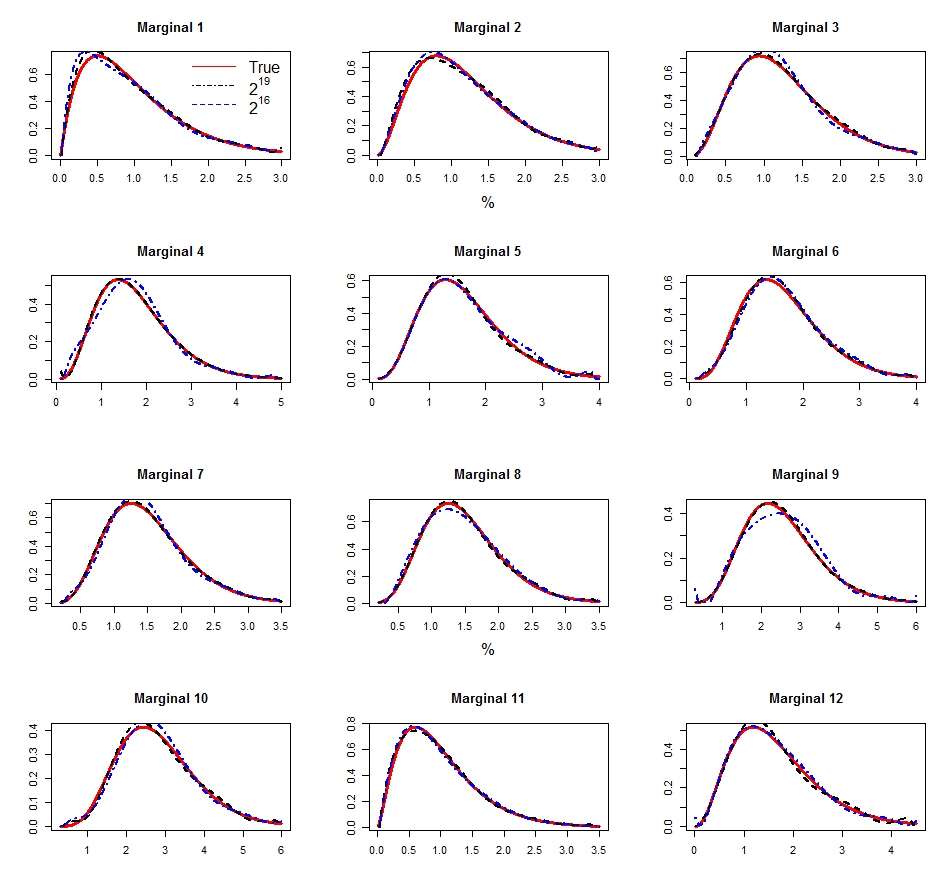}
\caption[]{$12$ dimensional Gamma using Korobov lattice with i) $2^{16} = 65,536$ and ii) $2^{19}=524,288$ points}
\end{figure}

\subsection{High dimensional posteriors using f-ANOVA} \label{f-anova ex}

For some applications each single function evaluation can involve considerable computation and therefore the efficiency gained by using LDS points may not be enough, especially for moderately large parameter spaces. In such cases, f-ANOVA can be used to estimate variance components and generate LDS points according to those weights. We illustrate how this can be done using the anchored f-ANOVA over weighted spaces approach on the Gamma example discussed in Section \ref{gamma-ex}. We assume that the Gamma posteriors have been obtained as a results of using an Exponential likelihood and Gamma priors. We derive the variance components for $2$ dimensional case and then generalise for $s$ dimensions.

Consider two independent Poisson processes with rate parameters $\lambda_1$ and $\lambda_{2}$. Then the waiting times (inter-arrival times) $t_1$ and $t_2$ are Exponentially distributed  with parameters $(\lambda_1)$ and $(\lambda_2)$ respectively. Let $\boldsymbol{t} = (t_1,t_2)$. Then the likelihood is given by

\begin{equation*}
f(\boldsymbol{t}|\lambda_1,\lambda_2) = \lambda_1 \exp[-\lambda_1 t_1] \times  \lambda_2 \exp[-\lambda_2 t_2].
\end{equation*}

The independent conjugate priors for $\lambda_1$ and $\lambda_2$ are $gamma(r_1,v_1)$ and $gamma(r_2,v_2)$ given by

\begin{equation*} g(\lambda_1) \propto \lambda_1^{r_1-1} \exp[-v_1 \lambda_1] \hbox{  and  }  g(\lambda_2) \propto \lambda_2^{r_2-1} \exp[-v_2 \lambda_2]. \end{equation*} Therefore, the bi-variate posterior is the product of two independent gamma distributions

\begin{equation*}
g(\lambda_1,\lambda_2|\boldsymbol{t}) \propto \lambda_1^{r_1}\exp[-(v_1+t_1)\lambda_1] \times \lambda_2^{r_2}\exp[-(v_2+t_2)\lambda_2].
\end{equation*}

Integrating the posterior distribution w.r.t. the Lebesgue measure can be viewed as integrating the likelihood w.r.t. the measure corresponding to the prior distribution. Therefore, we consider anchored functional decomposition of $ f(\boldsymbol{t}|\lambda_1,\lambda_2)$ w.r.t. weights $g(\lambda_1) g(\lambda_2).$

\begin{eqnarray*}
\nonumber f_{\emptyset}(\boldsymbol{t}|\lambda_1,\lambda_2)& = & f(\boldsymbol{t}|c_1,c_2) g(c_1) g(c_2),\\
\nonumber f_{\lambda_1}(\boldsymbol{t}|\lambda_1,\lambda_2) & = &  f(\boldsymbol{t}|\lambda_1,c_2)g(c_2) - f_{\emptyset}(\boldsymbol{t}|\lambda_1,\lambda_2),\\
& = & f(t_1|\lambda_1) f(t_2|c_2) g(c_2)  - f_{\emptyset}(\boldsymbol{t}|\lambda_1,\lambda_2),\\
 f_{\lambda_2}(\boldsymbol{t}|\lambda_1,\lambda_2) &= & f(\boldsymbol{t}|c_1,\lambda_2) g(c_1)- f_{\emptyset}(\boldsymbol{t}|\lambda_1,\lambda_2),\\
\nonumber & = & f(t_1|c_1) f(t_2|\lambda_2) g(c_1)  - f_{\emptyset}(\boldsymbol{t}|\lambda_1,\lambda_2) \hbox{  and  }\\
 f_{\lambda_1 \lambda_2}(\boldsymbol{t}|\lambda_1,\lambda_2) & = &  f(\boldsymbol{t}|\lambda_1,\lambda_2) - f(\boldsymbol{t}|\lambda_1,c_2) g(c_{2})- f(\boldsymbol{t}|c_1,\lambda_2) g(c_1)+ f_{\emptyset}(\boldsymbol{t}|\lambda_1,\lambda_2).
\end{eqnarray*}

For the ease of notation, let the constant terms be denoted as 

\begin{equation*}
a_0 = f(\boldsymbol{t}|c_1,c_2) g(c_1) g(c_2), \; a_1 = f(t_1|c_1)g(c_1) \hbox{ and } a_2 =  f(t_2|c_2) g(c_2).
\end{equation*}

Then,

\begin{equation*}
 f_{\lambda_1 \lambda_2}(\boldsymbol{t}|\lambda_1,\lambda_2) = f(\boldsymbol{t}|\lambda_1,\lambda_2) - f(t_1|\lambda_1) a_2 - f(t_2|\lambda_2) a_1 + a_0.
\end{equation*}

The variance component of $ f_{\lambda_1}(\boldsymbol{t}|\lambda_1,\lambda_2)$ is given by

\begin{eqnarray} \label{f-anova ex1}
\nonumber \sigma_{\lambda_1}^{2} &=& \int_{0}^{\infty}  f_{\lambda_1}^{2} (\boldsymbol{t}|\lambda_1,\lambda_2)\,  g(\lambda_1) g(\lambda_2) \,d(\lambda_1)\, d(\lambda_2)\\
\nonumber & = & \int_{0}^{\infty} [f(t_1|\lambda_1) a_2 - a_0]^{2} \,  g(\lambda_1) g(\lambda_2) \,d(\lambda_1)\, d(\lambda_2).\\
\end{eqnarray}

Solving (\ref{f-anova ex1}) gives

\begin{equation*}
\sigma_{\lambda_1}^{2} = (a_2)^{2} \frac{\Gamma(r_1+2)}{(2t_1+v_1)^{r_1+2}} - 2  a_2 a_0 \frac{\Gamma(r_1+1)}{(t_1+v_1)^{r_1+1}} + (a_0)^{2},
\end{equation*}

and similarly, 

\begin{equation*}
\sigma_{\lambda_2}^{2} = (a_1)^{2} \frac{\Gamma(r_2+2)}{(2t_2+v_2)^{r_2+2}} - 2  a_1 a_0 \frac{\Gamma(r_2+1)}{(t_2+v_2)^{r_2+1}} + (a_0)^{2},
\end{equation*}

\begin{equation*}
\nonumber \sigma_{\emptyset}^{2} = \int_{0}^{\infty}  f_{\emptyset}^{2} (\boldsymbol{t}|\lambda_1,\lambda_2)\,  g(\lambda_1) g(\lambda_2) \,d(\lambda_1)\, d(\lambda_2) = (a_0)^2.
\end{equation*}

In general, for an $s-$dimensional Poisson process, one can show that

\[
f_{\lambda_{k}}(\boldsymbol{t}|\lambda_{1},\cdots,\lambda_{s}) = \prod_{i \ne k} a_{i} [f(t_{k}|\lambda_{k}) - a_{k}], \hbox{ for } k=1,\ldots, s.
\]

and,

\[
\sigma_{\lambda_{k}}^{2} = \prod_{i \ne k} a_{i}^{2}  \left [  \frac{\Gamma(r_k+2)}{(2t_k+v_k)^{r_k+2}}  + a_{k}^{2} - 2 a_{k} \frac{\Gamma(r_k+1)}{(t_k+v_k)^{r_k+1}} \right ].
\]

Note that, here, we only consider the first order components, ignoring the higher order components since they involve solving tedious algebra - this is equivalent to assuming that the hyper parameters are independent of each other. Despite computing approximate variance contributions under this assumption, the low discrepancy sequences generated using these weights improve both the accuracy and the efficiency as illustrated in Figure $10$. Here, we use the same $10$ dimensional Gamma posteriors used in Section \ref{gamma-ex}.

\begin{figure}
\includegraphics[scale=0.45]{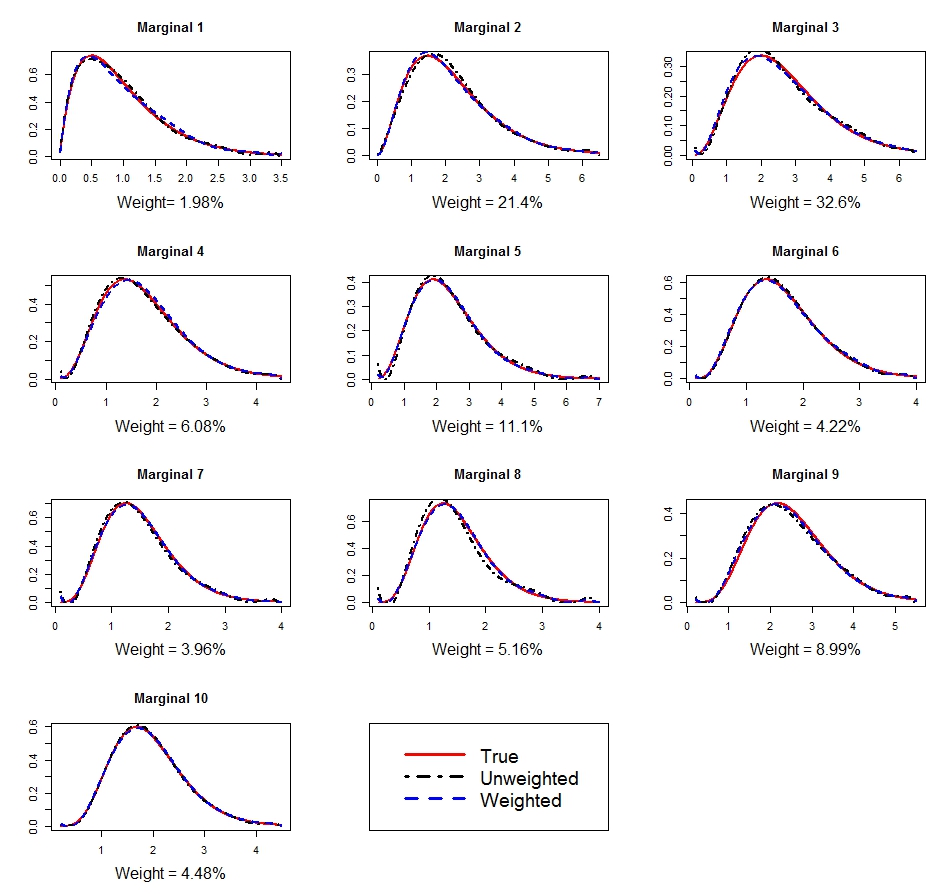}
\caption[]{Marginals of $10$ dimensional Gamma approximated using $2^{17}$ points of a weighted Korobov lattice as well as using $2^{17}$ points of (unweighted) Korobov lattice. The weight associated with each component is given underneath each graph.}
\end{figure}

Figure $10$ shows that the weighted Korobov lattice with $2^{17}$ points yields more accurate estimates of the marginals, especially for marginals with higher contributions to the total variance, than those obtained using the unweighted Korobov lattice with the same number of points. The Hellinger distances between the true and the estimated marginals are given in Table $1$. Another important implication is that, a comparable level of accuracy can be achieved by using fewer number of weighted points as illustrated in Table $1$. It shows that the Hellinger distances between the true and the estimated marginals for a weighted Korbov lattice with only $2^{16}$ points are comparable to (in some cases smaller, in other cases larger than) those obtained for the unweighted Korobov lattice with $2^{17}$ points. Thus, it is possible to further reduce the computational cost by taking into consideration the functional decomposition of the integrand.


%


















\begin{table}

\begin{center}

\begin{tabular}{ c|c|c|c|c}

\hline\\

Marginal & Weights  & Weighted $2^{16}$ & Weighted $2^{17}$ & Unweighted $2^{17}$ \\ [0.5ex]

\hline

1 & 1.98 & 0.05869 & 0.02223 & 0.03056 \\

2 & 21.4 & 0.03525 & 0.01122 & 0.02890\\

3 & 32.6 & 0.03848 & 0.02261 & 0.03731\\

4 & 6.08 & 0.04275 & 0.01904 &  0.03097 \\

5 & 11.1 & 0.04960 & 0.02183 & 0.08474\\

6 & 4.22 & 0.05409 & 0.02886 & 0.03016\\

7 & 3.96 & 0.06880 & 0.03650 & 0.06513\\

8 & 5.16 & 0.08478 & 0.04766 & 0.09922\\

9 & 8.99 & 0.07038 & 0.03187 & 0.05135\\

10 & 4.48 & 0.04998 & 0.02465 & 0.02642\\

\end{tabular}

\caption{Hellinger distances between the true and the estimated marginal using weighted and unweighted Korobov lattices}

\end{center}

\end{table}

\section{Discussion}

This paper aims to make three important contributions. First, we propose that LDS should be used in place of the grid based point sets when exploring the Bayesian posterior using a set of deterministic points.  This  improves computational efficiency and thus creates the possibility to use models with higher number of hyper parameters in this kind of inference framework. Second, we propose a new method to estimate marginal distributions using least squares polynomial fit; not only is this method easy and computationally cheap but also, it can be used on a wide variety of point sets including the grid and random points. Importantly it provides the means to mathematically prove the convergence properties of this approximation for various point sets including for the grid. Despite the recent popularity of grid based methods, this has not yet been done to the best of our knowledge. Third, we propose the use of the f-ANOVA technique to further improve the accuracy and the efficiency and show how this can be done using the anchored f-ANOVA on weighted spaces approach.

Low discrepancy points explore the parameter space more efficiently and have good convergence properties. As a result using these points provides not only the computational benefit but also improves the accuracy. We have shown that, unlike the grid based approaches, using LDS provides an accurate and efficient estimation even when the distribution is highly skewed or multi-modal.  They are also easy to simulate  using various platforms, see for example, L'Ecuyer and Munger (2016) and Christophe and Petr (2015). The main implication of this is that computationally efficient and accurate alternatives to MCMC methods can now be developed even when the (hyper) parameter space has more than $5$ dimensions. While we have illustrated this using $10$ and $12$ dimensional posteriors, posteriors of even higher dimensions can be considered in principle. The only limiting factor is the amount of computational workload required for evaluation of the posterior at every single point and this will vary from application to application.

The new least squares based approach to estimate marginal distributions can be used with a wide range of point sets including the grid and the random points. The key is to fit the polynomial of degree $(n-1)$ if an $n-$point grid is used. Theorems \ref{thm11} and \ref{thm12} show that when exploring $\Theta$ using any $n- $point grid, the least squares polynomial will pass through the point wise means obtained by averaging out the distribution over the remaining $s-1$ dimensions. Theorems \ref{thm2} and \ref{thm3} show that if the grid was constructed using Chebyshev nodes or using equidistant points then the least squares polynomial will converge to the true marginal as $m, n \rightarrow \infty$. However, using a grid requires $O(n^{s})$ function evaluations. Efficiency can be achieved by instead using LDS points and partitioning into $n$ equal parts. Theorem \ref{thm4} shows that in this case too the least squares polynomial will pass through the point wise means obtained for each part. Theorems \ref{thm5} and \ref{thm6} show that the least squares polynomial will in fact converge to the true marginal as $m, n \rightarrow \infty$. Using an LDS requires $O(nm)$ function evaluations, where, typically $m\ll n^{s-1}$ and hence the efficiency is achieved.

It is important to note that in practice the orthogonal projections (on each marginal)  do not have to be performed. One simply fits the least squares polynomial to $\pi(\boldsymbol{\theta})$ regressed against $\theta_{k}$. Also, multiplying the least squares polynomial with volume $\mathcal{V}_{-k}$ is not necessary if the marginals are to be normalised since this term will cancel out during normalisation.

We also show that further efficiency and accuracy can be achieved by taking into account the functional decomposition of the integrand. 

The f-ANOVA approach has not yet been widely explored in the statistical literature. While the f-ANOVA approach can be analytically quite demanding and computationally very expensive, we have shown that it is possible to find quick and useful approximations to the variance components using the anchored f-ANOVA on weighted spaces. The f-ANOVA computations shown in Section \ref{f-anova ex} were relatively simple since we had chosen a closed form likelihood and a conjugate prior distribution. The algebra involved may become quite tedious if the prior distributions are not conjugate and/or the posterior (or its approximation) is not available in a closed form. However, it is important to note that in such cases the variance components can still be computed by using the MC or the QMC approximations to Equation (\ref{anchored-2}) which is only a $1$ dimensional integral when we only consider the first order approximations.  We have shown that using the functional decomposition can improve both the accuracy as well as the efficiency of the estimate and hope that in future, complex statistical models will benefit from this approach.

Convergence results proved here assume that the support is correctly identified and the approximation is exact. In practice, often it is not possible to guarantee that the support has been identified correctly and this could induce some inaccuracy in the estimation. Also, the true posteriors may not be available/known and hence often an approximation to the posterior is evaluated at each point, and this will induce further inaccuracy depending on how good the approximation is. Identifying the support may involve finding the mode/s using numerical methods. In some cases, finding the mode can itself be computationally expensive and the computational cost may increase exponentially with the dimension of the parameter space. Where this is the case, efficient exploration of space using low discrepancy points will be even more important. Points, which are low discrepancy with respect to a particular probability measure (as opposed to the Lebesgue measure) can also be generated.  Thus, it may be possible to use LDS points generated with respect to the posterior distribution. Such an important sampling like approach is likely to further improve the efficiency. However, generating such points gets progressively difficult as the dimensionality increases and therefore this option needs to be explored further.






\appendix

\section*{ A1. Orthogonal projection}

Let  $\pi(\boldsymbol{\theta})$ be an $s$ dimensional distribution that is evaluated at $N$ distinct LDS points $\boldsymbol{\theta}={\boldsymbol{\theta}_{1},\ldots, \boldsymbol{\theta}_{N}}$ in $\Theta \subseteq \mathbb{R}^{s}$, where each $\boldsymbol{\theta}_{j} \in \Theta,\, j=1,\ldots,N,$ is an $s-tuple$ $\boldsymbol{\theta}_{j}=(\theta_{1j},\ldots,\theta_{sj}).$ These points along with the function evaluations, that is $(\boldsymbol{\theta}_{j}, \pi(\boldsymbol{\theta}_{j}))$ are $(s+1)-tuples$ conveniently represented in a matrix form as

\begin{eqnarray*}
\boldsymbol{\Psi}_{(s+1)\times N} = 
\begin{bmatrix}
    \theta_{1,1}  & \dots  & \theta_{s,1} & \pi(\boldsymbol{\theta}_1)   \\
    \theta_{1,2}  & \dots  & \theta_{s,2} & \pi(\boldsymbol{\theta}_2)  \\
    \vdots  & \ddots & \vdots & \vdots\\
    \theta_{1,N}  & \dots  & \theta_{s,N} & \pi(\boldsymbol{\theta}_N)  \\
\end{bmatrix}^\top \in \mathbb{R}^{s+1}
\end{eqnarray*}

To estimate the $k^{th}$ marginal $\pi(\theta_{k})$, we first orthogonally project $ \pi(\boldsymbol{\theta}_{j})$ on the $ k^{th}$ marginal to obtain 

\begin{eqnarray*}
\psi_k = 
\begin{bmatrix}
    \theta_{k,1} & \pi(\boldsymbol{\theta}_1)   \\
    \theta_{k,2} & \pi(\boldsymbol{\theta}_2)  \\
    \vdots   & \vdots\\
    \theta_{k,N} & \pi(\boldsymbol{\theta}_N) \\
\end{bmatrix}^\top \in \mathbb{R}^{2},
\end{eqnarray*}

$\psi_{k} = P_{k}\boldsymbol{\Psi}$, where $P_{k} = A(A^\top A)^{-1}A^\top$ is a projection matrix and $A_{(s+1)\times 2}$ is a unit basis vector for $\mathbb{R}^{2}$ with the $k^{th}$ entry in the first column and the $(s+1)^{th}$ entry in the second column  as one, all the remaining entries are zeros. For example, if $s=3$ and $k=2$ then,

\begin{eqnarray*}
\boldsymbol{\Psi}_{4\times N} = 
\begin{bmatrix}
    \theta_{1,1}  & \theta_{2,1} & \theta_{3,1} & \pi(\boldsymbol{\theta}_1)   \\
    \theta_{2,1}  &  \theta_{2,2}  & \theta_{3,2} & \pi(\boldsymbol{\theta}_2)  \\
    \vdots  & \ddots & \vdots & \vdots\\
    \theta_{1, N}  &  \theta_{2,N}  & \theta_{3,N} & \pi(\boldsymbol{\theta}_N)  \\
\end{bmatrix}^\top, \;
A_{4 \times 2} =  
\begin{bmatrix}
0 & 0\\
1 & 0 \\
0 & 0 \\
0 & 1\\
\end{bmatrix}, \;
P_{4 \times 4} = 
\begin{bmatrix}
0 & 0 & 0 & 0\\
0 & 1 & 0 & 0\\
0 & 0 & 0 & 0\\
0 & 0 & 0 & 1\\
\end{bmatrix}
\end{eqnarray*}

and
 
\begin{eqnarray*}
P \Psi = 
\begin{bmatrix}
   0 &  \theta_{2, 1}  & 0 & \pi(\boldsymbol{\theta}_1)   \\
   0 &  \theta_{2,2} & 0 &\pi(\boldsymbol{\theta}_2)  \\
   \vdots & \vdots & \vdots   & \vdots\\
    0 & \theta_{2,N} & 0 & \pi(\boldsymbol{\theta}_N) \\
\end{bmatrix}^\top, \mbox{ ignoring the rows with zeros }
\begin{bmatrix}
    \theta_{2,1} & \pi(\boldsymbol{\theta}_1)   \\
    \theta_{2,2} & \pi(\boldsymbol{\theta}_2)  \\
     \vdots  & \vdots\\
    \theta_{2,N} & \pi(\boldsymbol{\theta}_N) \\
\end{bmatrix}^\top = \psi_2.
\end{eqnarray*}

\section*{A2. Matrix definitions for least squares analysis}

Let $\mathcal{P}_{N}$ be a point set that fits the description given at the beginning of Section \ref{convergence}.  Let $\underline{M}$ be the design matrix when fitting a least squares polynomial of degree $(n-1)$ through the orthogonal projections of $\pi(\boldsymbol{\theta})$ on $\theta_{k}$. Such a projection has $n$ unique abscissa points $\theta_{k_{l}},\, l=1,\ldots,n.$ Then $\underline{M}$ is of size $N \times n$, and has a block structure,

\[ 
\underline{M} = 
\begin{bmatrix}
    \boldsymbol{1} & \boldsymbol{t}_1 & \boldsymbol{t}_{1}^{2} & \dots  & \boldsymbol{t}_{1}^{n-1} \\
 \boldsymbol{1} & \boldsymbol{t}_{2} & \boldsymbol{t}_{2}^{2} & \dots  & \boldsymbol{t}_{2}^{n-1} \\
    \vdots & \vdots & \vdots & \ddots & \vdots \\
    \boldsymbol{1} & \boldsymbol{t}_{n} & \boldsymbol{t}_{n}^{2} & \dots  & \boldsymbol{t}_{n}^{n-1} \\
\end{bmatrix},
\]

where each element block $\boldsymbol{t}_{l}^{p} \in \underline{M}, (p = 0,\ldots,n-1)$ is an $m \times 1$ column vector containing only the element $\theta_{k_{l}}^{p}$. We can also express $\underline{M}$ as a Kronecker product of the Vandermonde matrix $M$ and the $m \times 1$ column vector of $1's$,

\[
\underline{M} = M \otimes \boldsymbol{1}_{(m \times 1)},
\]

where, $M$ is a squares Vandermonde matrix of size $n$, which is of full rank and is invertible since all elements $\theta_{k_{l}}$ are unique.

For weighted least squares, we assign a weight $w_{l}$ to all projections corresponding to a unique abscissa point $\theta_{k_{l}}.$  We define the weights matrix $\underline{W}$ of size $N \times n$ by

\[
\underline{W} = 
\begin{pmatrix}
w_1I_{m} &0I_{m}&\cdots&0I_{m} \\
0I_{m}&w_2I_{m}&\cdots&0I_{m}\\
\vdots&\vdots&\ddots&\vdots\\
0I_{m}&0I_{m}&\cdots&w_nI_{m}
\end{pmatrix},
\]

where, $I_{m}$ is the identity matrix with size $m \times m$. $\underline{W}$ can also be expressed as a Kronecker product

\[
\underline{W} = W \otimes I_{m},
\]

where $W$ is the $n \times n$ diagonal matrix of weights

\[
W = 
\begin{pmatrix}
w_1&0&\cdots&0\\
0 &w_2&\cdots&0\\
\vdots&\vdots&\ddots&\vdots\\
0&0&\cdots&w_n
\end{pmatrix}.
\]

Let $\boldsymbol{\pi}$ be the $N \times 1$ vector of function evaluations $\pi(\boldsymbol{\theta}_{1}), \ldots,  \pi(\boldsymbol{\theta}_{N}).$

\section*{A3. Details on f-ANOVA decomposition}

\subsection*{A3.1. f-ANOVA over weighted spaces}

f-ANOVA on weighted spaces is defined as follows (see, for example, Griebel et.\ al.\ (2013)).  Let $g$ be a continuous and strictly positive univariate probability density function, i.e., $g(t)>0$ for all $t \in \mathbb{R}$ and $\int_{-\infty}^{\infty} g(t)\,dt = 1$. From this, we construct a $s$-variate probability density

\begin{equation}
\label{prior1}
g(\boldsymbol{\theta}) = \prod_{j=1}^{s} g(\theta_{j}) \hbox{ for }\boldsymbol{\theta} = (\theta_{1},\ldots,\theta_{s}) \in \mathbb{R}^{s}.
\end{equation}

If the function$f$ defined on $\mathbb{R}^{s}$ is integrable with respect to $g$, we write $$I(f) =  \int_{\mathbb{R}^{s}}f(\boldsymbol{\theta}) g(\boldsymbol{\theta}) \, d\mathbf{\theta}.$$  The decomposition of an $s$ dimensional integrand $f(\cdot)$  is given by $$ f(\boldsymbol{\theta}) = \sum_{I\subseteq \{1,\ldots,s\}} f_{I}(\boldsymbol{\theta}), $$ where for nonempty subsets we have $$f_{I}(\boldsymbol{\theta}) = \int_{\mathbb{R}^{s-d}} f(\boldsymbol{\theta}) g(\boldsymbol{\theta}_{-I})\,d\mathbf{\theta}_{-I} -\sum_{J\subset I} f_{J}(\boldsymbol{\theta}),$$ where $d=|I|$. The ANOVA component $f_{\emptyset}(u)$ is simply the integral $I(f)$. The variance of each component is $$\sigma_{I}^{2} =  \int_{\mathbb{R}^{s}}f_{I}^{2}(\boldsymbol{\theta}) g(\boldsymbol{\theta}) \,d\mathbf{\theta}.$$

\subsection*{A3.2. anchored f-ANOVA (over unweighted spaces)}

Indeed, the drawbacks of standard and weighted ANOVA consists in the need to compute complex  high dimensional integrals. Alternatively, anchored ANOVA decomposition gives a computationally efficient way for the numerical evaluation of component functions in ANOVA (Griebel et.\ al.\ 2013, Tang et.\ al.\ 2014,  Yang et.\ al.\ (2012), and Gao and Hesthavan (2010)). The Dirac measure is used instead of Lebesgue measure and therefore the total weight is concentrated at a single point $\boldsymbol{c}$ , called the \emph{anchor point}. That is, the components which are to be integrated out are instead evaluated at the anchor point. Anchored f-ANOVA will yield easy approximation to the functional decompositions.

Let $\boldsymbol{c} = (c_1,\ldots,c_s)$. Then $f_{\emptyset}(\boldsymbol{\theta})$ is approximated as  $$f_{\emptyset}(\theta_1,\ldots,\theta_s) = f_{\emptyset}(c_1,\ldots,c_s),$$ and the function corresponding to the first component is $$f_{\theta_1}( \theta_1,\ldots,\theta_s) = f(\theta_1,c_2,\ldots,c_s),$$ and in general for any subset $I$, $$f_{I}(\boldsymbol{\theta}) = f(\boldsymbol{c}_{-I};\boldsymbol{\theta}_I)  -\sum_{J\subset I} f_{J}(\boldsymbol{\theta}),$$ where $f(\boldsymbol{c}_{-I};\boldsymbol{\theta}_I)$ represents the the value of $f(\boldsymbol{\theta})$ evaluated at anchor point $\boldsymbol{c}$ except for the variables involved in $I$.

 The variance of each component can then be approximated as  $$\sigma_{I}^{2} =  \int_{\mathbb{R}^{s}}f_{I}^{2}(\boldsymbol{\theta})\,d\boldsymbol{\theta}.$$

It is important to note that this is only an approximation, since for the anchored ANOVA, the orthogonality property is not valid (Tang et.\ al.\ 2014) and also that the accuracy of this approximation depends on the choice of the anchor point (Zhang et.\ al.\ 2010). Note that computing the variance components could still yield divergent integrals  if the support was unbounded.

\section*{Acknowledgements}

Paul Brown's research has been funded by the University of Waikato's doctoral scholarship.


\end{document}